\documentclass[11pt]{article}
\usepackage[T1]{fontenc}
\usepackage[margin=1in]{geometry}
\usepackage{tempora}

\usepackage[linesnumbered,ruled,vlined]{algorithm2e}
\usepackage{amsfonts}
\usepackage{amsmath}
\usepackage{amssymb}
\usepackage{amsthm}
\usepackage{authblk}
\usepackage{bbm}
\usepackage{enumitem}
\usepackage{float}
\usepackage[para]{footmisc}
\usepackage{graphicx}
\usepackage{mathtools}
\usepackage{physics}
\usepackage{thmtools}
\usepackage{titling}
\usepackage{xcolor}

\makeatletter
\long\def\@makefntext#1{\leavevmode
    \@makefnmark\nobreak
    #1%
}
\makeatother

\theoremstyle{plain}
\newtheorem{theorem}{Theorem}
\newtheorem{lemma}[theorem]{Lemma}
\newtheorem{proposition}[theorem]{Proposition}
\theoremstyle{definition}
\newtheorem{definition}[theorem]{Definition}

\definecolor{medblue}{RGB}{0, 0, 100}
\definecolor{panblue}{RGB}{0,15,125}
\definecolor{carmine}{RGB}{150, 0, 24}
\usepackage[
    colorlinks=true,
    linkcolor=carmine,
    anchorcolor=red,
    citecolor=medblue,
    urlcolor=panblue
]{hyperref}

\usepackage[noabbrev,capitalize]{cleveref}
\crefname{equation}{Eq.}{Eqs.}

\providecommand{\centcol}{\mathrel{\mathop{:}}}
\newcommand{\no}[1]{\,\centcol\mathrel{#1}\centcol\,}
\newcommand{\kb}[1]{\ketbra{#1}{#1}}
\newcommand{\ho}{\text{h.o.}}
\newcommand{\E}{\mathbb{E}}
\newcommand{\e}{\mathrm{e}}
\newcommand{\bigO}{\mathcal{O}}

\newcommand{\changepoint}{changepoint}
\newcommand{\Changepoint}{Changepoint}
\newcommand{\HamCert}{\textnormal{\textsc{HamCert}}}
\newcommand{\HamChangePoint}{\textnormal{\textsc{HamChangePoint}}}

\newcommand{\pos}[1]{(#1)^+}
\newcommand{\delay}[1]{\overline{\E}[#1]}
\newcommand{\falsealarm}[1]{\E_\infty[#1]}
\let\oldtextsc\textsc
\renewcommand{\textsc}[1]{{\fontfamily{cmr}\oldtextsc{#1}}}

\DeclareMathOperator*{\esssup}{ess\,sup}
\DeclareMathOperator{\Ber}{Ber}

\NewDocumentCommand{\afffunc}{ m }{%
    \iffirstitem%
        \firstitemfalse%
    \else
        ,%
    \fi%
    \ref{affil#1}%
}
\NewDocumentCommand{\aff}{ >{\SplitList{,}}m }{%
    \newif\iffirstitem%
    \firstitemtrue%
    $^\text{%
        \ProcessList{#1}{\afffunc}%
        \firstitemtrue%
    }$%
}

\title{Autonomous Hamiltonian certification and \changepoint{} detection}
\date{March~30, 2026}

\author{
    Steven~T.~Flammia\aff{1,2},\,
    Dmitrii Khitrin\aff{3},\,
    Muzhou~Ma\aff{4,5},\authorcr
    Jamie~Sikora\aff{1},\,
    Yu~Tong\aff{3,6,7},\, and
    Alice~Zheng\aff{1}
}

\makeatletter
\hypersetup{
    pdftitle={\@title},
    pdfauthor={
        Steven T. Flammia,
        Dmitrii Khitrin,
        Muzhou Ma,
        Jamie Sikora,
        Yu Tong,
        Alice Zheng
    }
}
\makeatother

\newcommand{\afftarget}[2]{\footnotetext[#1]{\phantomsection\label{affil#1}#2}}

\begin{document}
\maketitle

\afftarget{1}{Department of Computer Science, Virginia Tech}
\afftarget{2}{Phasecraft Inc.}
\afftarget{3}{Department of Electrical and Computer Engineering, Duke University}
\afftarget{4}{Institute for Quantum Information and Matter, Caltech}
\afftarget{5}{Department of Computing and Mathematical Sciences, Caltech}
\afftarget{6}{Duke Quantum Center, Duke University}
\afftarget{7}{Department of Mathematics, Duke University}
\vspace*{-1.5em}

\begin{abstract}
    Modern quantum devices require high-precision Hamiltonian dynamics, but environmental noise can cause calibrated Hamiltonian parameters to drift over time, necessitating expensive recalibration.
    Detecting when recalibration is needed is challenging, especially since the very gates required for sophisticated verification protocols may themselves be miscalibrated.
    While cloud quantum computing services implement heuristic routines for triggering recalibration, the fundamental limits of optimal recalibration are yet to be illuminated.
    Here, we study the recalibration problem by developing efficient Hamiltonian certification and \changepoint{} detection protocols in the \emph{autonomous} setting.
    In this setting, we cannot rely on an external noiseless device and additionally use only single-qubit gates and measurements, making the protocols robust to the calibration issues for multi-qubit operations they aim to detect.
    For unknown $n$-qubit Hamiltonians $H$ and $H_0$ with operator norm $\norm{H}, \norm{H_0}$ bounded by $M$, our certification protocol distinguishes whether $\norm{H - H_0}_F \geq \epsilon$ or $\norm{H - H_0}_F \leq O(\epsilon/\sqrt{n})$ with sample complexity $\bigO(nM^2\ln(1/\delta)/\epsilon^2)$ and total evolution time $\bigO(nM\ln(1/\delta)/\epsilon^2)$, where $\delta$ bounds the failure probability.
    The protocol achieves this by evolving random stabilizer product states and performing adaptive single-qubit measurements based on a classically simulable hypothesis state.
    Extending this to continuous monitoring, we develop an online \changepoint{} detection algorithm using the CUSUM procedure that achieves a detection delay time bounded by $\bigO(nM\ln(M\falsealarm{T})/\epsilon^2)$, matching the known asymptotically optimal scaling with respect to false alarm run time $\falsealarm{T}$.
    Our approach enables quantum devices to autonomously monitor their own calibration status without requiring ancillary systems, entangling operations, or a trusted reference device, offering a practical solution for robust quantum computing with contemporary noisy devices.
\end{abstract}

\section{Introduction}

Implementing high-precision Hamiltonian dynamics is essential for advancing both analog quantum simulation and digital quantum computation~\cite{Daley23,ProbingQuantumDynamics,SpinLiquid, PhysRevX.11.041058,demo_FTgates,LogicalRydberg,GoogleBreakEven,AlgorithmicFT,QuantumVerification}.
In practice, environmental noise causes calibrated parameters to drift over time, degrading performance and requiring frequent recalibration that incurs substantial downtime.

Detecting when recalibration is needed poses a significant challenge: because quantum computations target classically intractable problems, standard validation methods are impractical as the correct outputs are often unpredictable.
While quantum error correction is in principle a remedy, its effectiveness fundamentally depends on accurate gate implementations, which in turn require precise device calibration.
For the same reason, multi-qubit operations become increasingly unreliable in the presence of such drift.

We study the following problem: can a quantum device detect drifts in \emph{its own} Hamiltonian dynamics?
We call the resulting procedures \emph{autonomous} as they are not assisted by ancillas or external devices, and require them to only use single-qubit state preparation and measurements.

Recent years have seen a surge of proposals for Hamiltonian certification, property testing, and learning\cite{RobustLearningSuperConducting,HuangTongFangSu2023learning,ma2024learningkbodyhamiltonianscompressed,LiTongNiGefenYing2023heisenberg,QLFH,Bakshi_2024,PhysRevA.110.062421,francca2022efficient,Yu2023robustefficient,SampleEfficientLearning,EntanglementHamiltonianLearning,Olsacher_2025,2024arXiv240101308O,PauliTransfer,MobusBluhmCaroEtAl2023dissipation,Hu_2025}, which---despite being in principle applicable to detecting miscalibration---typically make assumptions unsuitable for the setting.
In particular, access to an ancillary system, the ability to perform complicated multi-qubit gates, or sophisticated quantum controls are inherently circular in the context of recalibration, as performing these operations in existing quantum experiment platforms already requires the device to be calibrated.
This chicken-and-egg problem underscores the need for a protocol that uses only experimentally feasible controls to detect miscalibration.

Quantum state learning and testing protocols also offer relevant techniques \cite{buadescu2019quantum, huang2025certifying, gupta2025single}, though the recalibration problem demands additional considerations beyond the limited access model.
Typical research in this area focuses on guarantees related to state fidelity or other distance notions between states.
However, when dealing with quantum dynamics we need to consider distance measures of Hamiltonians, and the recalibration setting reveals different metrics as natural bottleneck resources.
In particular, we aim to minimize the total evolution time and detection delay in order to identify Hamiltonian drifts without occupying the quantum device for too long.

To account for the unpredictable nature of miscalibrations, it is imperative to extend autonomous testing to the setting of continuous monitoring.
This aligns with a recent surge of proposals to apply various classical change detection methods in quantum scenarios \cite{akimoto_changepoint_2011,sentis_quantum_2016,sentis_exact_2017,sentis_online_2018,yu_experimentally_2018,mohan_generalized_2023,llorens2025quantum}.
Of particular relevance is the cumulative sum (CUSUM) procedure \cite{page1954continuous,lorden1971procedures,burr1994multivariate,lai_sequential_1995}, with its extensions being recently applied to detecting changes in sequences of quantum states \cite{fanizza_qusum_2022,john2025fundamental,xu2025quickest,gong_secure_2020,gong2025joint,gong2025quickest}.

More details about the problem we address can be seen in \cref{fig:main_fig}, below.
We envision the use of our procedures not only as standalone methods, but also in conjunction with and aiding any other quantum algorithms.
When executing a series of any experiments that require the dynamics of the device to be calibrated, data for the proposed change detection procedure can be periodically gathered, triggering recalibration when a change is detected.
The algorithms ensure that such a change is detected rapidly, and that contexts from previous unterminated certification blocks are carried over without the need for quantum memory.
In cases where a change is not detected within a single certification block, our change detection yields a maximum likelihood estimate for its location that can be used to discard and re-run any affected experiments.
\begin{figure}[ht]
    \centering
    \includegraphics[width=1.0\linewidth]{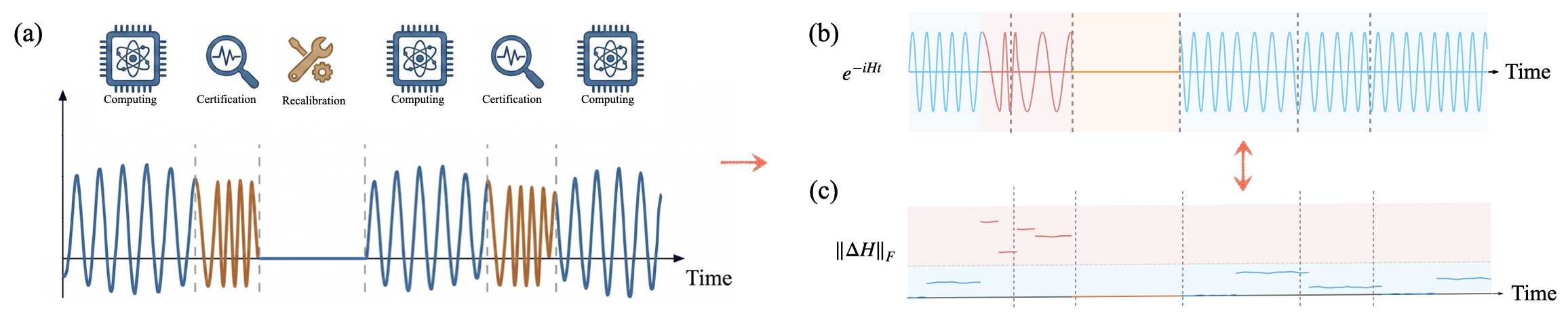}
    \caption{
        An example application of our \changepoint{} detection procedure to the recalibration problem.
        (a) The scheduling of a quantum device is divided into interleaved sections of certification (which run our change detection algorithm) and computing (which run other scheduled algorithms).
        The certification procedure can potentially trigger a recalibration procedure, which incurs a period of downtime.
        (b) Example Hamiltonian dynamics, with red regions indicating significant deviation from the target.
        The first shown certification section encounters this deviation and recalibrates; the second does not and proceeds as usual.
        (c) Same as the above in the frequency space.
        Lines within the red region indicate deviation above the upper threshold of our change detection procedure.
    }
    \label{fig:main_fig}
\end{figure}

\subsection{Main results}

As Hamiltonians in general can have an exponential number of terms, assumptions on their complexity are necessary to make relevant problems computationally tractable.
We do so by bounding the operator norm of each Hamiltonian by some $M$ and the magnitude of each term by a constant.
In particular, we let
\begin{equation}
    H = \sum_P \mu_P P,\quad
    P\in\{I,X,Y,Z\}^{\otimes n}\setminus\{I^{\otimes n}\},\quad
    \abs{\mu_P}\leq 1,\vspace*{-.5em}
\end{equation}
with $\norm{H} \leq M$.
The Hamiltonians are chosen to be traceless, since a two Hamiltonians whose difference is proportional to the identity have equivalent observable dynamics and differ only in an unobservable global phase.

The end goal of the setting we consider is to enable the creation of quantum devices robust to noise.
As the mere existence of such a noiseless device would defeat the purpose of considering the problem in the first place, we ensure that our algorithms do not have access to one.
That is, our proposed procedures are \emph{autonomous}, meaning they do not have access to an ancilla or any external quantum hardware.

Furthermore, to detect a drift in its own parameters, a noisy device needs to rely on operations that are not largely affected by this noise.
Specifically, we need to only use a particular subset of operations assumed to be robust---in this case, stabilizer product state preparation and arbitrary \emph{single-qubit} measurements.
This additionally implies we cannot reliably simulate the earlier mentioned evolution $\e^{-i H_0 t}$.

Our procedures are both single-qubit and autonomous, meaning they do not have:
\begin{enumerate}[label=\alph*)]
    \item Access to a second, noiseless device,
    \item An ancilla,
    \item Access to time evolution by $\e^{-i H_0 t}$ for an ideal Hamiltonian $H_0$ (either directly or via simulation),
    \item Access to entangled state preparation and measurements,
    \item Access to controlled evolution and inverse-time evolution,
    \item Access to Trotterization.
\end{enumerate}

At the core of our procedures is a single-sample method for testing Hamiltonians, which involves randomly preparing a state from an ensemble, evolving it by the unknown Hamiltonian, and measuring using a recently-proposed quantum state certification method \cite{gupta2025single}.
Our two main results---certification and \changepoint{} detection---differ in how they aggregate the classical statistics when this procedure is repeated, presenting the results either in terms of sample complexity or detection delay.

\paragraph{Certification:}

We provide an efficient Hamiltonian certification protocol assuming only limited product access to the quantum system.
The protocol time-evolves random stabilizer product states under the system Hamiltonian and performs adaptive single-qubit measurements.
This allows estimating the distance between the system and target Hamiltonians using the fidelity of the evolved and classically-simulated states, obtained with limited samples and total evolution time via an adaptively-chosen measurement basis.
Translating the fidelity between states to the Frobenius norm of the Hamiltonian difference is a delicate matter: if the evolution time is too short, then it is hard for the Hamiltonian difference to result in detectable infidelity; if the evolution time is too long, the perturbative approximation may break down and one cannot obtain a simple relation between the Hamiltonian difference and the fidelity.
Through careful analysis of the commutation relations of terms for time evolution, we identify an evolution time scaling that avoids the above problems.

We begin with a problem where the unknown $H$ does not change with time.
Our goal is to distinguish with fixed confidence between two hypotheses: $H$ and $H_0$ being near versus far.
\begin{definition}[Hamiltonian certification problem]\label{def:Ham_certification_prob}
    Given an unknown $n$-qubit lab Hamiltonian $H$ with bounded operator norm $M$, and a classically-known ideal $n$-qubit Hamiltonian $H_0$ with bounded operator norm $M$, the $\HamCert{}(\epsilon_1, \epsilon_2, \norm{\cdot})$ problem is to design a protocol that with high probability outputs
    \begin{itemize}[leftmargin=8em]
        \item[(Fail):] when $\norm{H-H_0} \geq \epsilon_2$,
        \item[(Pass):] when $\norm{H-H_0} \leq \epsilon_1$.
    \end{itemize}
\end{definition}
We solve this in a way that is efficient, both in the number of sampled (prepared and measured) states and the total time evolved with the given black-box device that implements $\e^{-i H t}$.
Below, we use $\norm{\cdot}_F$ to denote the normalized Frobenius norm.

\begin{theorem}[Adaptive Hamiltonian certification]\label{thm:certification}
    The $\HamCert(\bigO(\epsilon/\sqrt{n}), \epsilon, \norm{\cdot}_F)$ problem can be solved with probability at least $1 - \delta$ by an autonomous quantum algorithm using $N$ product state inputs and product basis measurements, and total evolution time $T$, where
    \begin{align}\label{eq:certification}
        N &= \bigO\left(\frac{nM^2\ln(1/\delta)}{\epsilon^2}\right),&
        T &= \bigO\left(\frac{nM\ln(1/\delta)}{\epsilon^2}\right).
    \end{align}
\end{theorem}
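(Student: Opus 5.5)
The plan is to reduce Hamiltonian certification to state certification of the evolved states $\e^{-iHt}\ket{\psi}$ against the classically known states $\e^{-iH_0t}\ket{\psi}$. Here $\ket{\psi}$ is a uniformly random stabilizer product state, and the short evolution time $t$ is chosen as $t\asymp 1/M$, up to constants. The test itself is the single-copy, single-qubit adaptive measurement of \cite{gupta2025single}. Given a hypothesis pure state $\ket{\phi}$ that can be simulated classically qubit by qubit (conditional marginals), it measures qubits one at a time in adaptively chosen product bases. It accepts with probability $1-\Theta(\text{infidelity}/n)$: accepting with probability $1$ when the states are equal, and rejecting with probability $\Omega(1-F)/n$ on average. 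Computing the conditional marginals of $\e^{-iH_0t}\ket{\psi}$ is a classical cost, which is allowed. The remaining work has two parts. The first is a two-sided link between the average infidelity and $\norm{H-H_0}_F^2 t^2$. The second is to aggregate Bernoulli outcomes with a threshold test.

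\textbf{Step 1 (perturbative fidelity bound).} Write $\Delta=H-H_0$ and $U=\e^{iH_0t}\e^{-iHt}$. For each input $\psi$ the infidelity is $1-\abs{\bra{\psi}U\ket{\psi}}^2$, which by Duhamel/Dyson expansion equals $t^2\,\mathrm{Var}_\psi(\bar\Delta)+\bigO(M\norm{\Delta}t^3)$. Here $\bar\Delta=\frac1t\int_0^t \e^{iH_0s}\Delta\e^{-iH_0s}ds$, and I use $\norm{H},\norm{H_0}\le M$. Averaging over random stabilizer product states uses the $2$-design property on each qubit factor: $\E_\psi \bra{\psi}A\ket{\psi}^2$ weights a Pauli pair $(P,Q)$ by $3^{-\abs{P}}$ when $P=Q$, and products give similar terms. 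This yields $\E_\psi\mathrm{Var}_\psi(A)\ge c\,\norm{A}_F^2\cdot 3^{-w}$, which is weight-dependent. To avoid weight dependence I instead argue more crudely: the first-order term is $\norm{\Delta}_F^2$ up to factors polynomial in $n$, and the lower bound only needs $\Omega(\norm{\Delta}_F^2 t^2)$ after accounting for $\bar\Delta\approx\Delta$ when $tM\ll1$. The upper bound $\E_\psi(1-F)\le t^2\E_\psi\bra{\psi}\bar\Delta^2\ket{\psi}=t^2\norm{\bar\Delta}_F^2\le t^2\norm{\Delta}_F^2$ holds exactly, because a product $2$-design average of $\bra{\psi}A^2\ket{\psi}$ equals the normalized trace. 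The lower bound is the main obstacle. The worry is that stabilizer product states may be insensitive to high-weight Pauli components, and that the commutator corrections $[H_0,\Delta]$ must be controlled. I would first bound the commutator correction by $t\norm{[H_0,\Delta]}_F\le 2tM\norm{\Delta}_F$, made small by choosing $t=c/M$. I would then use the locality assumption $\abs{\mu_P}\le1$ together with the identity $\E_\psi\abs{\bra\psi P\ket\psi}^2=3^{-\abs{P}}$. If that is too weak, I would cite the paper's commutation analysis in the form $\E_\psi(1-F)\ge c\,t^2\norm{\Delta}_F^2$ for $t=\Theta(1/M)$.

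\textbf{Step 2 (gap).} Combine Step 1 with the $\Theta(1/n)$ sensitivity of the single-qubit test. In the far case $\norm{\Delta}_F\ge\epsilon$, the rejection probability per sample is $p_{\rm far}\ge c\,\epsilon^2/(nM^2)$. The test accepts identical states with certainty and rejects with probability at most $1-F$, so in the near case $\norm{\Delta}_F\le c'\epsilon/\sqrt n$ we get $p_{\rm near}\le t^2\norm{\Delta}_F^2\le c'^2\epsilon^2/(nM^2)$. This is a constant factor below $p_{\rm far}$ once $c'$ is small. This is exactly why the Pass threshold carries the $1/\sqrt n$ factor.

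\textbf{Step 3 (aggregation).} Repeat the protocol $N$ times with fresh $\psi$ and threshold the empirical rejection rate at the midpoint. A multiplicative Chernoff bound for Bernoulli variables with mean $p=\Theta(\epsilon^2/(nM^2))$ and a constant relative gap requires $N=\bigO(\ln(1/\delta)/p)=\bigO(nM^2\ln(1/\delta)/\epsilon^2)$. Each run evolves for $t=\Theta(1/M)$, so $T=Nt=\bigO(nM\ln(1/\delta)/\epsilon^2)$. Only product inputs, $\e^{-iHt}$, and single-qubit measurements are used, so the algorithm is autonomous.
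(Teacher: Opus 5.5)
Your overall plan is the same as the paper's: random stabilizer product inputs evolved for $t=\Theta(1/M)$, the single-copy test of \cite{gupta2025single} (rejection probability at most $1-F$ and at least $\Omega((1-F)/n)$), a near-case upper bound of $t^2\norm{\Delta}_F^2$, and threshold aggregation giving $N$ and $T=Nt$. Steps~2 and~3 are fine. The gap is the far-case lower bound $\E_\psi(1-F)=\Omega(t^2\norm{\Delta}_F^2)$. This is the core of the theorem, and you never establish it. You conclude that the design average suppresses $\E_\psi\mathrm{Var}_\psi(A)$ by a $3^{-w}$ weight factor. You then fall back to a ``crude'' argument that loses factors polynomial in $n$, which would change both the Pass threshold and $N$. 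Finally you propose to cite the very bound you need. The weight worry is backwards, because the $3^{-\mathrm{wt}(P)}$ factors sit in the \emph{subtracted} term. For traceless $A=\sum_P a_P P$ you have $\E_\psi\bra{\psi}A^2\ket{\psi}=\sum_P a_P^2$ (a $1$-design suffices) and $\E_\psi\bra{\psi}A\ket{\psi}^2=\sum_P 3^{-\mathrm{wt}(P)}a_P^2$. Hence $\E_\psi\mathrm{Var}_\psi(A)=\sum_{P\neq I}(1-3^{-\mathrm{wt}(P)})a_P^2\geq\tfrac23\norm{A}_F^2$, uniformly in the weights.

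Even with that corrected, your per-state expansion does not close. Its remainder $\bigO(M\norm{\Delta}t^3)$ (which is not dimensionless as written) is controlled by the operator norm. But $\norm{\Delta}$ can be $\Theta(M)$ while $\norm{\Delta}_F=\epsilon$, so for $t=c/M$ it swamps the main term $c^2\epsilon^2/M^2$. You need a remainder proportional to $\norm{\Delta}_F^2$ \emph{after} averaging.

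The paper gets this by applying the design identity to the exact unitary $W(t)=e^{iH_0t}e^{-iHt}=\sum_P w_P(t)P$. The identity gives $\E_\psi F=\sum_P 3^{-\mathrm{wt}(P)}\abs{w_P(t)}^2$, and unitarity gives $\sum_P\abs{w_P(t)}^2=1$, so $\E_\psi(1-F)\geq\tfrac23\bigl(1-\abs{w_I(t)}^2\bigr)$. The even function $\abs{w_I(t)}^2=\abs{2^{-n}\Tr[W(t)]}^2$ is then Taylor-expanded as $1-t^2\norm{\Delta}_F^2+\bigO(t^4M^2\norm{\Delta}_F^2)$. All derivatives are bounded by Cauchy--Schwarz in the normalized Hilbert--Schmidt inner product, so each carries $\norm{\Delta}_F^2$.

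Your $\bar\Delta$ idea can also be repaired along these lines. The non-identity part of $W(t)$ has normalized Frobenius norm at least $t\norm{\bar\Delta}_F-\norm{R_2}_F$, where $R_2$ is the second-order Duhamel remainder. It satisfies $\norm{R_2}_F\leq t^2M\norm{\Delta}_F$, and $\norm{\bar\Delta}_F\geq(1-tM)\norm{\Delta}_F$. Together these give $\E_\psi(1-F)\geq\tfrac23 t^2\norm{\Delta}_F^2(1-2tM)^2$.

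Finally, your near-case inequality involving $\bar\Delta$ is not exact as written, since $W(t)\neq I-it\bar\Delta$. The bound you want still holds exactly: $1-\abs{z}^2\leq 2(1-\mathrm{Re}\,z)=\norm{(W(t)-I)\ket{\psi}}^2$, whose average over $\psi$ is $\norm{e^{-iHt}-e^{-iH_0t}}_F^2\leq t^2\norm{\Delta}_F^2$.
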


We remark that with an increasing system size $n$, environmental noise acts on more qubits and the normalized Frobenius norm would be expected to scale proportional to $\sqrt{n}$ in physically reasonable circumstances.
If this is accounted for by similarly scaling $\epsilon$ by $\sqrt{n}$, the above bounds on $N$ and $T$ are not dependent on $n$.

Additionally, the classical computation for our protocol is efficient as discussed in \cref{section:classical_certification_complexity}, scaling as a polynomial of the number of nonzero Pauli terms.
Corresponding sparsity assumptions are not required for the quantum protocol, with the shown results only needing the Hamiltonians to have bounded norm.
The same details apply to the subsequent \changepoint{} detection result in \cref{thm:online}, below.

\paragraph{\Changepoint{} detection:}

The above certification result provides a sample-efficient way of distinguishing the cases of $H$ and $H_0$ being close versus far.
In practice, the dynamics $H$ of a miscalibrated device need not be constant throughout time, instead following some time-dependent $H(t)$ that drifts away from $H_0$.
In the Hamiltonian \changepoint{} problem, we consider a discretization of this setting where $H(t)$ changes from $H_0$ to some $H_1$, then $H_2$ and so on.
Treating a time-dependent Hamiltonian as a sequence of time-independent ones in this manner is possible with relevant assumptions on the spectrum of the former.

While the certification procedure trivially extends to a sequence of $H_i$ for $i \in \{1,2,\ldots\}$ so long as the distance conditions in \cref{def:Ham_certification_prob} are satisfied for all $i$, in practice, quantum devices become more miscalibrated over time.
That is, $\norm{H_i-H_0}_F$ exceeds the larger threshold only after some \changepoint{} step $i \geq \nu$.
Our goal is to detect such a change as soon as possible, to initiate a recalibration procedure and avoid running the faulty device for a long time.
If on the other hand $\norm{H_i-H_0}_F$ stays relatively low for all $i$, we aim to prevent extraneous recalibration that incurs associated overhead.

\begin{definition}[Hamiltonian \changepoint{} problem]\label{def:ham_changepoint}
    Given a sequence of unknown $n$-qubit Hamiltonians $H_i$ for $i \in \{1, 2, \ldots\}$ and a classically-known ideal $n$-qubit Hamiltonian $H_0$, all having operator norm bounded by $M$, the $\HamChangePoint{}(\epsilon_1, \epsilon_2, \norm{\cdot})$ problem is to design a protocol that with high probability
    \begin{itemize}[leftmargin=8em]
        \item[(Terminates):] when $\norm{H_i - H_0} \geq \epsilon_2\ $ for all $i \in \{\nu+1, \nu+2, \ldots\}$ and any \changepoint{} location $\nu$,
        \item[(Keeps running):] when $\norm{H_i - H_0} \leq \epsilon_1\ $ for all $i \in \{1, 2, \ldots\}$.
    \end{itemize}
\end{definition}

In contrast to certification, which terminates after a fixed time with either hypothesis as the outcome, a change detection algorithm provides \emph{continuous inspection} by continuing execution in the case of a change not having happened.
To enable algorithms to terminate rapidly when a change does happen, we design them to be \emph{online} (i.e., \emph{streaming}), meaning they process data as soon as it becomes available and frequently output the current best guess for the hypothesis.

Attempting to apply the certification procedure to this problem directly leads to an issue where the accumulated statistics are influenced by the true \changepoint{} location $\nu$ in a way that is not possible to predict without knowing $\nu$.
While we could split the sequence into blocks of size $N$ (as in \cref{eq:certification}) that are certified separately, this incurs a delay overhead from termination decisions only being made at multiples of $N$.

We consider two objectives in this task: the delay elapsed on average between $\nu$ and when the change is detected, and the average run length until the algorithm falsely detects a change when there is none.
We use $\E_\nu$ and $\E_\infty$ to indicate conditioning on the two cases in \cref{def:ham_changepoint}, and $\overline{\E}$ to denote the largest $\E_\nu$ over finite $\nu$ and $H_1, \ldots, H_\nu$ (see \cref{def:delay}).
Letting $N$ be the round after which an algorithm terminates, the number of samples until false termination when Hamiltonians are $\epsilon_1$-close is represented by $\falsealarm{N}$, and the detection delay when they start (and keep) being $\epsilon_2$-far is $\delay{\pos{N - \nu}}$ (where $\pos{x} = \max(x, 0)$).
We additionally consider the same metrics in terms of the total evolution time $T$; for an algorithm that evolves using the Hamiltonians for the same time $t$ at each step, $\delay{\pos{T-t\nu}} = t \delay{\pos{N-\nu}}$ and $\E_\infty[T] = t \E_\infty[N]$.

It is known that an asymptotically-optimal procedure for the classical version of this task is given by an algorithm called CUSUM \cite{page1954continuous,lorden1971procedures}.
Applying it to an intermediate single-sample result used to obtain \cref{thm:certification}, we obtain the following performance bounds.

\begin{theorem}[Autonomous Hamiltonian change detection]\label{thm:online}
    The $\HamChangePoint{}(\bigO(\epsilon/\sqrt{n}), \epsilon, \norm{\cdot}_F)$ problem can be solved by an autonomous online quantum algorithm using a single product state input and a product basis measurement per step.
    The algorithm terminates in $N$ steps and total evolution time $T$, where
    \begin{align}\label{eq:cusum-adaptive}
        \delay{\pos{N-\nu}} &\leq \bigO\left( \frac{n M^2 \ln \falsealarm{N}}{\epsilon^2} \right),&
        \delay{\pos{T-t\nu}} &\leq \bigO\left( \frac{n M \ln (M \falsealarm{T})}{\epsilon^2} \right).
    \end{align}
\end{theorem}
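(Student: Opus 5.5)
The plan is to reduce \cref{thm:online} to a classical CUSUM problem on a stream of independent Bernoulli variables, with the rate of each variable controlled by the single-sample test behind \cref{thm:certification}. At each step $i$ we draw a random stabilizer product state $\ket{\psi}$ and evolve it under $H_i$ for a fixed time $t = \Theta(1/M)$. We then measure with the adaptive single-qubit scheme of \cite{gupta2025single}, using the classically simulable hypothesis state $\e^{-iH_0 t}\ket{\psi}$. This yields one bit $X_i\in\{0,1\}$, with $X_i=1$ meaning ``reject''.

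The first step is to take from the certification analysis the two-sided bound on the rejection probability. The single-shot test rejects with probability $\Theta(1-F)$, and averaging over the stabilizer product ensemble gives $\E[1-F] = \Theta(t^2\norm{H_i-H_0}_F^2)$ up to perturbative corrections, valid in the regime $tM = \bigO(1)$. From this we would extract the following statement:
\begin{itemize}
    \item if $\norm{H_i-H_0}_F\le c\,\epsilon/\sqrt{n}$, then $\Pr[X_i=1]\le p_0$;
    \item if $\norm{H_i-H_0}_F\ge\epsilon$, then $\Pr[X_i=1]\ge p_1$;
\end{itemize}
with $p_1 = \Theta(\epsilon^2 t^2/n)$ and $p_0 \le p_1/2$. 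The factor $1/n$ is where the $\sqrt{n}$ gap in the thresholds enters: it comes from the ensemble average and from the \cite{gupta2025single} fidelity-to-acceptance conversion. This is exactly the gap that makes $N=\bigO(nM^2\ln(1/\delta)/\epsilon^2)$ suffice in \cref{thm:certification}, since $1/p_1 = \Theta(nM^2/\epsilon^2)$ when $t=\Theta(1/M)$.

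Next we run CUSUM with the fixed log-likelihood ratio of $\Ber(p_1)$ versus $\Ber(p_0)$. Set
\[
    \ell(x)=x\ln\frac{p_1}{p_0}+(1-x)\ln\frac{1-p_1}{1-p_0},
    \qquad
    W_i=\max(0,W_{i-1}+\ell(X_i)),
\]
and stop at $N=\min\{i: W_i\ge b\}$. The main obstacle is that the true rates are not exactly $p_0$ and $p_1$; they are only bounded on the correct side, and they may vary with $i$. We handle this with monotonicity, since $\ell$ is increasing in $x$:
\begin{itemize}
    \item \emph{Pre-change.} Each increment is stochastically dominated by its value under $\Ber(p_0)$, so $\E[\e^{\ell(X_i)}]\le 1$. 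Then $\e^{W_i}$ (suitably restarted) is a nonnegative supermartingale. Lorden's argument \cite{lorden1971procedures}, or a direct Ville/optional-stopping bound, then gives $\falsealarm{N}\ge \e^{b}$, so we choose $b=\ln\falsealarm{N}$.
    \item \emph{Post-change.} The drift is at least $\mathrm{KL}(p_1\|p_0)=\Theta(p_1)$ because $p_0\le p_1/2$, and the increments are bounded by $\ln(p_1/p_0)=\bigO(1)$ once we fix $p_0=p_1/2$. Wald's identity applied after the worst-case start $W_\nu=0$ (the worst case for CUSUM, which gives the $\overline{\E}$ bound) yields $\delay{\pos{N-\nu}}\le (b+\bigO(1))/\Theta(p_1)=\bigO(nM^2\ln\falsealarm{N}/\epsilon^2)$.
\end{itemize}
Independence across steps holds conditionally on the Hamiltonian sequence, since fresh states and measurements are used each round.

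Finally, we convert to evolution time. Since $t=\Theta(1/M)$, we have $T=tN$, so $\delay{\pos{T-t\nu}}=t\,\delay{\pos{N-\nu}}=\bigO(nM\ln\falsealarm{N}/\epsilon^2)$. Also $\falsealarm{N}=\falsealarm{T}/t=\Theta(M\falsealarm{T})$, which produces the $\ln(M\falsealarm{T})$ in \cref{eq:cusum-adaptive}.

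The delicate step, which we would verify first, is the uniform upper bound $p_0$ in the near case. The $\bigO(\epsilon/\sqrt{n})$ threshold must give a rejection rate that is at most a constant fraction of the far-case rate, including the higher-order perturbative and measurement-protocol terms. If the constants coming from the certification analysis only give $p_0\le p_1(1-\Omega(1))$ rather than $p_1/2$, we would shrink the constant $c$ in the near threshold accordingly.
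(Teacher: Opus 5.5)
Your proposal is correct, and its skeleton matches the paper's proof. It uses the same single-shot reject bit and runs CUSUM with the least-favorable pair $\Ber(\xi/2n)$ versus $\Ber(\xi/n)$, which is your $p_0=p_1/2$. It linearizes $D(Q\|P)=\Theta(\xi/n)$, where the paper records the constant $\ln 2-\tfrac12$, and converts to evolution time via $T=tN$ with $t=c/M$.

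The real difference is how you get CUSUM's guarantees when the per-step rates are only bounded on one side and vary with $i$. The paper first proves a stochastic-ordering lemma (\cref{lem:cusum-robust}) showing that the average run length is monotone in each step's distribution. It then reduces to the i.i.d.\ least-favorable pair and quotes Lorden's asymptotic result, so its delay bound carries a $(1+o(1))$ factor. You instead check the two per-step inequalities directly:
\begin{itemize}
\item $\E[\e^{\ell(X_i)}]\le 1$ for every pre-change rate $\theta\le p_0$;
\item $\E[\ell(X_i)]\ge D(p_1\|p_0)$ for every post-change rate $\theta\ge p_1$.
\end{itemize}
Both hold because these expectations are increasing in $\theta$. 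You then close with martingale arguments that need only independence across steps. For the false-alarm side, Ville's inequality for each restarted SPRT plus Lorden's parallel-SPRT argument gives $\falsealarm{N}\ge\e^{b}$. Alternatively, optional stopping applied to $\e^{W_i}-i$ gives $\e^{b}-1$; this process is a supermartingale because $\max(1,y)\le 1+y$. Note that $\e^{W_i}$ by itself is not a supermartingale, because of the reset at $0$. For the delay side, Wald/optional stopping from the worst-case start $W_\nu=0$, with overshoot at most $\ln 2$, gives $\delay{\pos{N-\nu}}\le (b+\ln 2)/D(p_1\|p_0)$.

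Your route therefore yields explicit non-asymptotic bounds without needing the appendix lemma. The paper's route additionally inherits Lorden's matching lower bound, i.e.\ optimality, which the theorem itself does not require.

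One correction to your first step: the state test does not reject with probability $\Theta(1-F)$.
\begin{itemize}
\item Because the hypothesis state is accepted with certainty, the rejection probability is at most $1-F$.
\item In the other direction, \cite{gupta2025single} only guarantees rejection probability of order $(1-F)/n$.
\item The average over stabilizer product inputs contributes no factor of $n$, since $\E[1-F]=\Theta(t^2\norm{H-H_0}_F^2)$ for small $tM$.
\end{itemize}
So the $1/n$ in $p_1$, and with it the $\sqrt{n}$ gap between the thresholds, comes entirely from this asymmetry between completeness and soundness of the state test. Your two bullet bounds on the rejection rates are still the right ones, so this is a misattribution rather than a gap. Passing them through the average over random inputs uses that the soundness bound holds pointwise in the form $\Omega((1-F)/n)$; the paper relies on this implicitly as well.
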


The above relationship between detection delay and runtime till false alarm is known to be optimal in the \changepoint{} setting, with dependence on other parameters similar to that of \cref{thm:certification}.
In addition, the CUSUM-based algorithm yields a maximum-likelihood estimator for the \changepoint{} location when one is detected \cite{lorden1971procedures}, meaning that it allows a quantum device not only to monitor its own performance in a robust manner, but also to void and re-run any computations affected by a change after it is detected.

Our motivation for developing this \changepoint{} procedure stems from a particular practical problem.
Suppose one has access to a quantum device with robust realizations of single-qubit gates and measurements, yet with entangling gates that suffer from noise and accumulating miscalibration.
We wish to run algorithms on this device that require all gates to be properly functioning.
As illustrated in \cref{fig:main_fig}, we can interlace instances of algorithm execution with blocks of time allotted for miscalibration checks.
The introduced \changepoint{} procedure is particularly useful for this, as it can trigger recalibration and do so with several important properties.
The miscalibration checks are not affected by the miscalibrations themselves, so an affected quantum computer can run this check on itself.
Rapid change detection avoids running the device when it is malfunctioning, and since the method yields an estimator for the occurring \changepoint{} location, we also learn which algorithm runs' results are affected by miscalibration and can re-run them.
This presents a practical solution to the issues faced by contemporary quantum devices, ensuring robust computation.

\subsection{Comparison to alternative procedures}

Our proposed protocols achieve scaling similar to existing results, while requiring significantly simpler hardware capabilities.
For instance, suppose that in addition to the single miscalibration-prone device we now have access to a second device that can reliably implement $H_0$.
Supposing that we can entangle the two devices (i.e., prepare and measure in the Bell basis), this yields a trivial procedure with performance matching ours, save for the dependence on $n$.
By comparison, our protocols do not rely on entangled state preparation and measurements and do not need a second trusted device or even the ability to perform a SWAP-test between two devices.

Analogous outcomes of comparable scaling despite lesser resource requirements appear when comparing our proposed algorithms to known results in the literature.
A known Hamiltonian certification protocol \cite[Theorem~4.4]{gao2025quantumhamiltoniancertification} achieves the optimal scaling for this problem, but it uses additional resources such as $n+2$ ancillary qubits, controlled evolution (including negative time), and multi-qubit gates.
An ancilla-free version of this result \cite[Theorem~5.5]{gao2025quantumhamiltoniancertification} does not use controlled evolution or multi-qubit gates, and incurs an additional term of $\mathbf{m}^{3/2}$ (where Hamiltonians are $\mathbf{m}$-sparse) which is worse than our linear $M$ dependence (with $M$ bounding the operator norm).
The algorithms in \cite{gao2025quantumhamiltoniancertification} additionally rely on using classical knowledge of $H_0$ to implement $\e^{i H_0 t}$ and Trotterization; the former of which is a reasonable assumption in the context of Hamiltonian certification, but trivially eliminates the need to consider the recalibration problem in the first place, and the latter greatly simplifying unitary discrimination by allowing addition of a driving term~\cite{childs2000quantum}.
A Heisenberg-limited certification scheme for local Hamiltonians has been proposed in \cite{lee2025optimal} without using negative time evolution, but multi-qubit operations to implement $\e^{i H_0 t}$ are still needed.
In another recent work \cite{Nakahira2023}, the author also considered the quantum change point problem for Hamiltonians and unitary channels, where an optimal success probability is provided using joint measurements over multiple copies of the channel without considering experimentally relevant constraints.
The goal of our work is fundamentally different as we are considering a more experimentally realistic scenario where only a noisy device is provided and only simple and single-copy operations are allowed.

We note that a dependence on the system size $n$ in the sample complexity and lower threshold is not a necessity of the setting, as evidenced by prior work \cite{gao2025quantumhamiltoniancertification}.
Indeed, in our case this is a byproduct of utilizing the measurement procedure proposed by \cite{gupta2025single}, and a remaining open question of said work is to either find an efficient procedure that avoids this dependence or disprove its existence.
An alternate version of our results may be attained by instead using direct fidelity estimation \cite{Flammia_2011}, which avoids the dependence on $n$ at the cost of a worse scaling in $\epsilon$.

As a closing remark, we note that while the proposed algorithms only rely on single-qubit operations, those require precise calibration due to the adaptive measurement procedure in \cite{gupta2025single}.
In practice, a small amount of error in single-qubit gates can be tolerated due to the procedure being robust, though this introduces a constant noise floor under which miscalibrations are undetectable (i.e., $\epsilon$ needs to exceed some small value dependent on the single-qubit gate imprecision).
Additionally, our procedures are not resistant to SPAM errors, an extension which we leave for future work.

\section{Hamiltonian certification protocol}\label{sec:certification}
In this section, we define the Hamiltonian certification protocol and prove associated results.
We describe the protocol in \cref{sec:Adaptive_protocol}, and provide its detailed analysis in \cref{sec:adaptive_protocol_details}.

\subsection{Adaptive protocol}\label{sec:Adaptive_protocol}
We are given the classical description of an $n$-qubit Hamiltonian $H_0$ and an unknown $n$-qubit Hamiltonian $H$ via access to its time-evolution.
For the $i$-th experiment, where $i = 1,2,\ldots,N$, we first randomly sample the input state $\ket{\psi_0}$ from the set of stabilizer product states
\begin{equation}
    \{\ket{0},\ket{1},\ket{+},\ket{-},\ket{+i},\ket{-i}\}^{\otimes n}
\end{equation}
uniformly at random, and evolve it under the experimental Hamiltonian for time $t$, where the evolution time is set to be $t = cM^{-1}$, for small constant $c$ which we discuss in further detail at the end of \cref{sec:adaptive_protocol_details}.
After the time-evolution, we then perform adaptive measurement under an adaptive basis $\mathcal{M}$.
To construct $\mathcal{M}$, we use the classically simulated state
\begin{equation}
    \ket*{\Tilde{\phi}_t} = \sum_{j=0}^l\frac{(-itH)^j}{j!}\ket{\psi_0}= e^{-it H}\ket{\psi_0} + \bigO((t M)^{l+1}),
\end{equation}
which is the $l$-th order truncation of the state $\ket{\psi_0}$ evolved under $H_0$.
Setting $l = \Theta(\log_{1/c}(2nM^2/(c\epsilon^2)))$ guarantees that $(t M)^{l+1}< t^2\epsilon^2/2n$.
We adaptively construct the measurement basis $\mathcal{M}$ to be a \textit{DT basis} \cite{gupta2025single}, in which $\ket*{\Tilde{\phi}_t}$ is always a phase state, i.e.~having equal probability of getting both measurement outcomes on each qubit in such a basis.
Using the measurement outcomes from the adaptive measurement basis $\mathcal{M}$ on the lab state $\ket{\psi_t} = e^{-it H}\ket{\psi_0}$ over the $N$ experiments, we perform a hypothesis test as described in \cref{algo:adaptive_ham_certif} to decide whether $\norm{H-H_0}_F$ is above or below the $\epsilon$-dependent thresholds.

\begin{algorithm}[!ht]
\caption{Adaptive Hamiltonian certification}
\label{algo:adaptive_ham_certif}
\KwIn{System size $n$, upper bound on the Hamiltonian operator norm $M$, classical description of the $n$-qubit $H_0$, time-evolution access to the unknown $n$-qubit $H$, threshold $\epsilon$, success probability $\delta$.}
\KwOut{\textbf{Accept} if $\norm{H-H_0}_F \leq \bigO(\epsilon/\sqrt{n})$, \textbf{Reject} if $\norm{H-H_0}_F \geq \epsilon$}

\textbf{Set} $N \gets \bigO\!\left( \frac{n M^2 \ln(1/\delta)}{\epsilon^2} \right)$;

\For{$i=1$ \KwTo $N$}{
    \textbf{Run} $x_i \gets\,$Hamiltonian certification subroutine (\cref{algo:ham_cert_sub}).
  }

\textbf{Perform} a likelihood ratio test on $\{x_i\}_1^N$ with hypotheses $\norm{H-H_0}_F = \bigO(\epsilon/\sqrt{n})$ or larger than $\epsilon$;
\end{algorithm}

The procedure for the experiment at each step $i$ is shown in \cref{algo:ham_cert_sub}, and for completeness, we include a listing of the adaptive certification algorithm of \cite{gupta2025single} in \cref{algo:state_cert_sub}.

\begin{algorithm}[!ht]
\caption{Hamiltonian certification subroutine}
\label{algo:ham_cert_sub}
\KwIn{System size $n$, upper bound on the Hamiltonian operator norm $M$, classical description of the $n$-qubit $H_0$, time-evolution access to the unknown $n$-qubit $H$.}
\KwOut{\textbf{Accept} or \textbf{Reject}}

\textbf{Sample} $\ket{\psi_0}$ from $\{\ket{0},\ket{1},\ket{+},\ket{-},\ket{+i},\ket{-i}\}^{\otimes n}$ uniformly at random;

\textbf{Set} $t = cM^{-1}$;

\textbf{Evolve} $\ket{\psi_0}$ under $H$ for time $t$ to obtain $\ket{\psi_t} = e^{-it H}\ket{\psi_0}$;

\textbf{Classically compute} $\ket*{\Tilde{\phi}_t}$;

\textbf{Run} state certification subroutine (\cref{algo:state_cert_sub}) with $\ket{\psi_t}$ and $\ket*{\Tilde{\phi}_t}$ as input;
\end{algorithm}

\begin{algorithm}[!ht]
\caption{State certification subroutine (\cite[Algorithm~1]{gupta2025single})}
\label{algo:state_cert_sub}
\KwIn{One copy of unknown quantum state $\ket{\psi}$, known classical description of quantum state $\ket{\phi}$.}
\KwOut{\textbf{Accept} or \textbf{Reject}}

\textbf{Sample} a $k\in[n]$ uniformly at random;

\textbf{Measure} the first $k-1$ qubits of $\ket{\psi}$ in the computational basis, obtaining~$x\in\{0,1\}^{k-1}$;

\textbf{Measure} the last $n-k$ qubits of $\ket{\psi^x}$ in the basis $\mathcal{M}_x$, obtaining~$\ell$, where $\mathcal{M}_x$ is the adaptively constructed basis based on $\ket{\phi^x}$.
Let $\ket{\psi^\prime}$ be the resulting $1$-qubit state; let $\ket{\phi^\prime}$ denote $\ket{\phi}$ conditioned on outcomes $x, \ell$;

\textbf{Measure} $\ket{\psi^\prime}$ in a basis containing $\ket{\phi^\prime}$;

\textbf{Output} \textsc{Accept} iff the outcome is $\ket{\phi^\prime}$;

\end{algorithm}

We show that, with $\bigO(n\ln(1/\delta)/\xi)$ sample complexity, where $\xi$ is the fidelity threshold and $\delta$ is an upper bound of the failure probability, one could certify whether or not $\ket{\psi_t}$ is $\xi$ far away from $\ket*{\Tilde{\phi}_t}$.

\subsection{Detailed analysis of the adaptive protocol}\label{sec:adaptive_protocol_details}
For each experiment, let $\rho = e^{-iH_0t}\kb{\psi_0}e^{iH_0t}$ be the hypothesis state after evolving under the target Hamiltonian $H_0$, and $\sigma = e^{-iHt}\kb{\psi_0}e^{iHt}$ be the state obtained under the actual system Hamiltonian $H$.
We aim to determine whether $\norm{H-H_0}_F$ is large or small in relation to a threshold $\epsilon$, where $\norm{\cdot}_F$ is the normalized Frobenius norm.
We reduce the problem to comparing the fidelity between $\rho$ and $\sigma$ to $1-\xi$.

The fidelity between the two pure states $\rho$ and $\sigma$ can be written as
\begin{equation}
    F(\rho,\sigma)
    = \abs{\bra{\psi_0}e^{iH_0t}e^{-iHt}\ket{\psi_0}}^2
    = \abs{\bra{\psi_0}W(t)\ket{\psi_0}}^2,
\end{equation}
where
\begin{equation}
    W(t) = e^{iH_0t}e^{-iHt}\,.
\end{equation}
Note that if $H=H_0$, then $W(t) = I$, and the fidelity between $\rho$ and $\sigma$ is always $1$.
We have
\begin{subequations}
    \begin{align}
        \abs{\bra{\psi_0}W(t)\ket{\psi_0}}^2
        &= \Tr(\bra{\psi_0}W(t)\ket{\psi_0}(\bra{\psi_0}W(t)\ket{\psi_0})^*)\\
        & = \Tr((W(t)\otimes W^\dagger(t))(\kb{\psi_0}\otimes\kb{\psi_0})).
    \end{align}
\end{subequations}
Since $\ket{\psi_0}$ is a randomly sampled stabilizer product state, it can be written as
\begin{equation}
    \kb{\psi_0} = \kb{P,b} = \bigotimes_{i=1}^n \left(\frac{I + (-1)^{b_i}P_i}{2}\right), \qquad
    \text{where}\ \ P = \bigotimes_{i=1}^n P_i\in\{X,Y,Z\}^{\otimes n}
\end{equation}
is the $n$-qubit Pauli operator, and $b = (b_1,b_2,\ldots,b_n)\in\{0,1\}^n$ as it is a pure state.
By linearity, we have that the expectation value of the fidelity over the randomly sampled input state is
\begin{equation}
    \E[F(\rho,\sigma)] = \mathbb{E}[\Tr((W(t)\otimes W^\dagger(t))(\kb{\psi_0}\otimes\kb{\psi_0}))] = \Tr((W(t)\otimes W^\dagger(t)) K),
\end{equation}
where $K = \frac{1}{6^n} \left( \sum_{P,b} \kb{P,b} \otimes \kb{P,b}\right)$.
Expanding the operator $W(t)$ in the Pauli basis gives
\begin{equation}
    W(t) = \sum_{P\in\{I,X,Y,Z\}^{\otimes n}} w_P(t) P,\quad w_P(t) = \frac{1}{2^n}\Tr(PW(t)).
\end{equation}
And we can write the expected fidelity as
\begin{equation}
        \mathbb{E}(F(\rho,\sigma)) = \Tr((W(t)\otimes W^\dagger(t)) K) \sum_{Q,R} \overline{w_Q(t)} w_R(t) \Tr((Q \otimes R) K).
\end{equation}
Notice that the latter term may be expanded as
\begin{subequations}
\begin{align}
    \Tr((Q \otimes R) K)&= \frac{1}{6^n} \sum_{P,b} \bra{P,b} Q \ket{P,b} \bra{P,b} R \ket{P,b} \\
    &= \frac{1}{6^n} \sum_{P,b} \left( \prod_{i=1}^n \bra{P_i,b_i} Q_i \ket{P_i,b_i} \bra{P_i,b_i} R_i \ket{P_i,b_i} \right) \\
    &= \prod_{i=1}^n \left( \frac{1}{6} \sum_{P_i,b_i} \bra{P_i,b_i} Q_i \ket{P_i,b_i} \bra{P_i,b_i} R_i \ket{P_i,b_i} \right)\label{eq:product_3design}
\end{align}
\end{subequations}
To further simplify this, consider the quantity inside the product.
Fix arbitrary $Q,R\in\{I,X,Y,Z\}^{\otimes n}$, and let $P\in\{X,Y,Z\}^{\otimes n}$.
Because $\{\ket{P,b}\}$ forms a 3-design on each qubit individually, we have

\begin{subequations}
\begin{align}
    \frac{1}{6}\sum_{P_i,b_i} \bra{P_i,b_i}Q_i\ket{P_i,b_i}\bra{P_i,b_i}R_i\ket{P_i,b_i}
    &= \int_{\mathcal{S}(\mathcal{X})} \bra{\psi}Q_i\ket{\psi}\bra{\psi}R_i\ket{\psi} \mathrm{d}\psi \\
    &= \frac{1}{6}(\Tr(Q_i)\Tr(R_i)+\Tr(Q_iR_i)),
\end{align}
\end{subequations}
where the latter equation is due to Weingarten calculus.
One can check that the right-hand side is non-zero only if $Q_i=R_i$.
If $Q_i=R_i=I$ then the result is 1, and if $Q_i=R_i\neq I$ then the result is $1/3$.
Therefore

\begin{equation}
    \frac{1}{6} \sum_{P_i,b_i}\bra{P_i,b_i}Q_i\ket{P_i,b_i}\bra{P_i,b_i}R_i\ket{P_i,b_i} =
    \begin{cases}
        0 &\text{ if } Q_i\neq R_i,\\
        1 &\text{ if } Q_i=R_i=I, \\
        1/3 &\text{ if } Q_i=R_i\neq I.
    \end{cases}
\end{equation}

Using this identity, we see that each term in the product in \cref{eq:product_3design} is nonzero precisely when $Q_i = R_i$ for all $i$, that is, $Q = R$.
In this case the term equals $(1/3)^{\text{wt}(Q)}$ with $\text{wt}(Q)$ being the number of non-identity terms in $Q$.
The expected fidelity then simplifies to
\begin{equation}
    \mathbb{E}(F(\rho,\sigma)) = \sum_P 3^{-\mathrm{wt}(P)} \abs{w_P(t)}^2.
\end{equation}
Since we only want to test whether the fidelity is $1$ or far from $1$, we focus on analyzing the term associated with $I$ in the sum.
When $H=H_0$, $\abs{w_I(t)}^2 = 1$ and all other terms in this sum are zero.
When $H$ is far from $H_0$, $\abs{w_I(t)}^2$ is far from $1$.

We now analyze the function
\begin{equation}
    \mathbf{F}_I(t)
    = \abs{w_I(t)}^2
    = \frac{1}{4^n} \abs{\Tr[W(t)]}^2
    = \frac{1}{4^n} \abs{\Tr[e^{iH_0 t}e^{-iHt}]}^2.
\end{equation}
We proceed with a Taylor expansion of $\mathbf{F}_I(t)$.
Notice that all odd orders of $\mathbf{F}_I(t)$ are zero, and so we write $\mathbf{F}_I(t)$ as the summation of all even-order terms
\begin{equation}
    \mathbf{F}_I(t) = 1 - \frac{1}{2}t^2\mathbf{F}_I^{(2)}(t) + \frac{1}{4!}t^4\mathbf{F}_I^{(4)}(t) + \cdots.
\end{equation}
For the second-order terms, we could directly expand $e^{-iH_0t}$ and $e^{iHt}$ in $W(t)$.
Expand the target Hamiltonian and the system Hamiltonian in the Pauli basis:
\begin{equation}
    H_0 =  \sum_a \mu_{a0} P_a,\quad H = \sum_a \mu_{a1}P_a,
\end{equation}
we have
\begin{subequations}\label{eq:W_tayloy}
    \begin{align}
        W(t) &= e^{iH_0 t}e^{-iH t}\\
        &= \left(I + iH_0 t - \frac{1}{2}H_0^2t^2 + \ho{}\right)\left(I-iH t - \frac{1}{2}H^2t^2 + \ho{}\right)\\
        &= I - it(H-H_0) -\frac{t^2}{2}\left(H_0^2-2H_0H+H^2\right) + \ho{}\\
        &= I - it\sum_a(\mu_{a1}-\mu_{a0})P_a-\frac{t^2}{2}\sum_{a,b}(\mu_{a0}\mu_{b0} - 2\mu_{a0}\mu_{b1} + \mu_{a1}\mu_{b1})P_aP_b + \ho{}\\
        & = I - it\sum_a(\mu_{a1}-\mu_{a0})P_a-\frac{t^2}{2}\sum_{a,b}\left((\mu_{a1} - \mu_{a0})(\mu_{b1}-\mu_{b0}) +(\mu_{a1}\mu_{b0}-\mu_{a0}\mu_{b1})\right)P_aP_b + \ho{}\nonumber\\
        & = I - it\sum_a(\mu_{a1}-\mu_{a0})P_a-\frac{t^2}{2}\sum_a (\mu_{a1}-\mu_{a0})^2I - \frac{t^2}{2}\sum_{a\neq b}(\mu_{a1}\mu_{b0}-\mu_{a0}\mu_{b1})P_aP_b + \ho{}\nonumber\\
        & = \left(1-\frac{t^2}{2}\sum_a (\mu_{a1}-\mu_{a0})^2\right)I - it\sum_a(\mu_{a1}-\mu_{a0})P_a - \frac{t^2}{2} \hspace*{-1em}\sum_{\substack{a\neq b\\ [P_a,P_b]\neq 0}}\hspace*{-0.8em}(\mu_{a1}\mu_{b0}-\mu_{a0}\mu_{b1})P_aP_b + \ho{},\nonumber
    \end{align}
\end{subequations}
when $P=I$, the entanglement fidelity of $W(t)$ with identity is
\begin{equation}
    \abs{w_I(t)}^2 = \left(1-\frac{t^2}{2}\sum_a (\mu_{a1}-\mu_{a0})^2\right)  + \ho{} = 1 - t^2 \norm{H_0-H}_F^2 + \ho{}
\end{equation}

In order to obtain a bound on the high order terms $\ho{}$, we first analyze $W(t)$.
We first introduce some notation that greatly simplifies our derivation.

\begin{definition}[Normal ordering]
    For a monomial $A$ of $H_0$ and $H$, we define its normal ordering $\no{A}$ as the monomial obtained by moving all $H_0$ to the left.
    For example, $\no{H_0 H^2 H_0}=H_0^2 H^2$.
    We extend this definition to all polynomials of $H_0$ and $H$ by linearity and to all their limits by continuity.
\end{definition}

We can then write down the $k$th-order derivatives of $W(t)$ as
\begin{equation}\label{eq:derivatives_of_W(t)}
    W^{(k)}(t) = i^k e^{iH_0 t}\no{(H_0-H)^k}e^{-iHt} = i^k \no{e^{iH_0 t}(H_0-H)^k e^{-iHt}}.
\end{equation}

For $\no{e^{iH_0 t}(H_0-H)^k e^{-iHt}}$, we can use the cyclic property of the trace to show that
\begin{subequations}\label{eq:trace_normal_ordering_cyclic}
\begin{align}
    &\Tr[\no{e^{iH_0 t}(H_0-H)^k e^{-iHt}}] \\
    &= \Tr[H_0e^{iH_0 t}\no{(H_0-H)^{k-1}} e^{-iHt}]-\Tr[e^{iH_0 t}\no{(H_0-H)^{k-1}} e^{-iHt}H] \\
    &= \Tr[e^{-iHt}H_0e^{iH_0 t}\no{(H_0-H)^{k-1}} ]-\Tr[ e^{-iHt}H e^{iH_0 t}\no{(H_0-H)^{k-1}}] \\
    &= \Tr[e^{-iHt}(H_0-H) e^{iH_0 t} \no{(H_0-H)^{k-1}}].
\end{align}
\end{subequations}
Moreover, we have
\begin{equation}\label{eq:second_order_frobenius_bound}
    \norm{\no{(H_0-H)^2}}_F \leq \norm{H_0(H_0-H)}_F + \norm{(H_0-H)H}_F\leq (\norm{H_0}+\norm{H})\norm{H_0-H}_F,
\end{equation}
and
\begin{subequations}\label{eq:third_order_frobenius_bound}
    \begin{align}
        \norm{\no{(H_0-H)^3}}_F &\leq \norm{H_0\no{(H_0-H)^2}}_F + \norm{\no{(H_0-H)^2}H}_F \\
        &\leq (\norm{H_0}+\norm{H})\norm{\no{(H_0-H)^2}}_F\leq (\norm{H_0}+\norm{H})^2\norm{H_0-H}_F.
    \end{align}
\end{subequations}
The above inequalities allow us to bound the derivatives of $f(t)$.
First, we have $\abs{w_I(t)} \leq 1$.
For $w_I'(t)$,
\begin{equation}
    \abs{w_I'(t)}
    = \frac{1}{2^n} \abs{\Tr[e^{-iHt}e^{iH_0 t}(H_0-H)]}
    \leq \norm{H_0-H}_F \norm{e^{-iHt}e^{iH_0 t}}_F
    = \norm{H_0-H}_F
\end{equation}
using the Cauchy-Schwarz inequality for the normalized Hilbert-Schmidt inner product.
For $w_I''(t)$, we use \eqref{eq:trace_normal_ordering_cyclic} to get
\begin{subequations}
\begin{align}
    \abs{w_I''(t)}
    &= \frac{1}{2^n} \abs{\Tr[e^{-iHt}(H_0-H) e^{iH_0 t}(H_0-H)]} \\
    &\leq \norm{(H_0-H)e^{iHt}}_F \norm{e^{iH_0 t}(H_0-H)}_F \\
    &= \norm{H_0-H}_F^2,
\end{align}
\end{subequations}
where we first used the Cauchy-Schwarz inequality and then the fact that the normalized Frobenius norm is invariant under multiplication with a unitary.
For $w_I^{(3)}(t)$, we similarly have
\begin{subequations}
\begin{align}
    \abs*{w_I^{(3)}(t)}
    &= \frac{1}{2^n} \abs{\Tr[e^{-iHt}(H_0-H) e^{iH_0 t}\no{(H_0-H)^2}]} \\
    &\leq \norm{H_0-H}_F \norm*{\no{(H_0-H)^2}}_F \\
    &\leq (\norm{H_0}+\norm{H}) \norm{H_0-H}_F^2.
\end{align}
\end{subequations}
This also applies to the fourth-order derivative, for which we have
\begin{subequations}
\begin{align}
    \abs*{w_I^{(4)}(t)}
    &= \frac{1}{2^n} \abs{\Tr[e^{-iHt}(H_0-H) e^{iH_0 t}\no{(H_0-H)^3}]} \\
    &\leq \norm{H_0-H}_F \norm*{\no{(H_0-H)^3}}_F \\
    &\leq (\norm{H_0}+\norm{H})^2\norm{H_0-H}_F^2.
\end{align}
\end{subequations}

We are now ready to bound $\mathbf{F}_I^{(4)}(t)$ through
\begin{equation}
    \mathbf{F}_I^{(4)}(t) = w_I^*(t)w_I^{(4)}(t) + 4(w_I'(t))^* w_I^{(3)}(t) + 6 (w_I''(t))^* w_I''(t) + 4(w_I^{(3)}(t))^*w_I'(t) + (w_I^{(4)}(t))^*w_I(t).
\end{equation}
We therefore have
\begin{equation}
    \abs*{\mathbf{F}_I^{(4)}(t)}
    \leq 2(\norm{H_0}+\norm{H})^2 \norm{H_0-H}_F^2 + 8(\norm{H_0}+\norm{H}) \norm{H_0-H}_F^3 + 6\norm{H_0-H}_F^4.
\end{equation}
By this bound together with $\norm{H},\norm{H_0} \leq M$, we have
\begin{equation}
    \abs*{\mathbf{F}_I^{(4)}(t)} = \bigO(M^2\norm{H_0-H}_F^2),
\end{equation}
since the threshold $\epsilon$ for $\norm{H_0-H}_F^2$ is small compared to $M$.
Therefore,
\begin{subequations}\label{eq:F_I}
\begin{align}
    \mathbf{F}_I(t)
    &= 1 - t^2 \norm{H_0-H}_F^2 + \bigO\biggl(\sum_{j = 2}^{\infty}t^{j+2}M^j \norm{H_0-H}_F^2\biggr) \\
    &= 1 - t^2 \norm{H_0-H}_F^2 + \bigO\left(t^4M^2\norm{H_0-H}_F^2\right).
\end{align}
\end{subequations}
On the one hand, if $H=H_0$, then $\mathbf{F}_I(t) =1$.
On the other, $\norm{H_0-H}_F=\epsilon$ means that $H$ is far from $H_0$ over a threshold $\epsilon$, and so
\begin{equation}
    \mathbf{F}_I(t) = 1 - t^2 \epsilon^2 + \bigO\left(t^4M^2\epsilon^2\right).
\end{equation}

With this, we can write the fidelity between $\rho$ and $\sigma$ as
\begin{subequations}
\begin{align}\label{eq:fidelity-bound}
    \mathbb{E}(F(\rho,\sigma))
    &= \sum_P \abs{w_P(t)}^2 3^{-wt(P)}
    = \abs{w_I}^2 + \sum_{P\neq I} \abs{w_P(t)}^2 3^{-wt(P)}\\
    &\leq \abs{w_I(t)}^2 + \frac{1}{3}\sum_{P\neq I} \abs{w_P(t)}^2
    = \abs{w_I(t)}^2 + \frac{1}{3}(1-\abs{w_I(t)}^2)\\
    &=\frac{1}{3}+\frac{2}{3} \abs{w_I(t)}^2
    = 1-\frac{2}{3}t^2\epsilon^2 + \bigO(t^4M^2\epsilon^2),
\end{align}
\end{subequations}
where the inequality comes from the fact that $wt(P)\geq 1,\forall P\neq I$.
If we set $t = cM^{-1}$ with an appropriate constant $c$, when $\norm{H_0 - H}_F \geq \epsilon$ we could upper bound the fidelity by $1-\xi$, where $\xi = \Omega(\epsilon^2/M^2)$.

With the same evolution-time parameters chosen, we can also lower bound the expected infidelity as
\begin{subequations}
\begin{align}\label{eq:in-fidelity-bound}
    \mathbb{E}(1-F(\rho,\sigma))
    &= 1 - \sum_P \abs{w_P(t)}^2 3^{-wt(P)}
    = 1 - \abs{w_I}^2 - \sum_{P\neq I}\abs{w_P(t)}^2 3^{-wt(P)}\\
    &\leq 1 - \abs{w_I(t)}^2
    = t^2 \norm{H_0-H}_F^2 + \bigO\left(t^4M^2\norm{H_0-H}_F^2\right),
\end{align}
\end{subequations}
where the last equality is by \eqref{eq:F_I}.
With the same $t = cM^{-1}$, if $\norm{H_0-H}_F<\bigO\left(\epsilon/\sqrt{n}\right)$, then the fidelity can be lower-bounded by $1-\xi/n$, where $\xi = \bigO(\epsilon^2/M^2)$.
By \cite{gupta2025single}, we can distinguish between $F(\rho,\sigma) > 1 - \xi/n$ or $F(\rho,\sigma)<1-\xi$ with probability at least $1-\delta$ by $\bigO(n\ln(1/\delta)/\xi)$ samples of the state.
So, to solve the single-device Hamiltonian certification problem as in \cref{def:Ham_certification_prob} with probability at least $1-\delta$, the sample complexity and the total evolution time complexity are
\begin{equation}
    N = \bigO\left(\frac{nM^2\ln(1/\delta)}{\epsilon^2}\right),\quad T = Nt = \bigO\left(\frac{nM\ln(1/\delta)}{\epsilon^2}\right).
\end{equation}

\section{Hamiltonian \changepoint{} detection}

In deriving the Hamiltonian certification procedure in \cref{sec:certification} above, we have presented its results in terms of sample complexity when the unknown Hamiltonian $H$ remains constant.
In practice (and more so in our targeted recalibration setting), several additional details need to be considered:
\begin{itemize}
    \item The unknown Hamiltonian $H$ can be time-dependent.
    When performing several rounds of evolution by the same time $t$, we may effectively obtain evolution by differing Hamiltonians $H_1, H_2, \ldots$.
    \item The change in $H$ can happen anywhere in the above sequence; as such, algorithms should give more priority to more recent data.
    \item Knowledge about a detected change needs to be known as soon as possible, meaning that algorithms need frequent opportunities to terminate.
\end{itemize}
These details are characterized by the \changepoint{} setting, introduced previously in \cref{def:ham_changepoint}.
The goal of this section is to derive the results in this setting as stated in \cref{thm:online}, which may be viewed as an alternate means to present earlier certification results.
In particular, the analysis here relies on the single-sample Hamiltonian certification procedure in \cref{algo:ham_cert_sub} with a different method for aggregating its classical statistics.
The method and analysis for the latter follow from standard results, briefly introduced below.

\subsection{Cumulative sum (CUSUM)}

Our earlier certification algorithm performs a hypothesis test between two possibilities.
Specifically, \cref{algo:adaptive_ham_certif} may be viewed as a likelihood ratio test which rejects when the alternate hypothesis appears to be at least $1/\delta$ times more likely than the null.
Generalizing this to the \changepoint{} setting involves keeping track of alternate hypotheses for each change location, terminating at step $i$ if
\begin{equation}
    \frac{q(x_{\nu+1})}{p(x_{\nu+1})} \cdot \frac{q(x_{\nu+2})}{p(x_{\nu+2})} \cdots \frac{q(x_i)}{p(x_i)} \geq \exp(h) \qquad \text{for any}\ \nu \in \{0, \ldots, i\},
\end{equation}
where $p$ and $q$ are the densities of the pre-change and post-change distributions, and $x_1, x_2, \ldots$ is the sequence of observations (for clarity, throughout this section we denote random variables with uppercase letters and their observations with corresponding lowercase ones).
This ensures that one of the change locations is at least $\exp(h)$-times more likely than no change, and by frequently checking this condition the algorithm terminates as soon as it is satisfied.
Knowing for which value of $\nu$ the above condition rejects additionally tells us the maximum likelihood estimate for the change location.
By taking the logarithm, the above conditions are equivalent to
\begin{equation}
    \sum_{j=\nu+1}^i \ln \frac{q(x_j)}{p(x_j)} \geq h \qquad \text{for any}\ \nu \in \{0, \ldots, i\},
\end{equation}
and we can further simplify this to checking a single inequality by keeping track of a non-negative cumulative sum as seen in \cref{algo:cusum} below.

\begin{algorithm}[htbp]
\caption{Cumulative sum (CUSUM) \cite{page1954continuous}}
\label{algo:cusum}
\KwIn{Known distributions $P, Q$, log-likelihood threshold $h$, observations $x_1,x_2,\ldots$ of independent $X_1,X_2,\ldots$ (received sequentially).}
\KwOut{\textbf{Continue} or \textbf{Terminate}}

\textbf{Set} $s_0 \gets 0$;

\For{$i=1,2,\ldots$}{
    \textbf{Receive} $x_i$;

    \textbf{Compute} $z_i \gets \ln \frac{dQ}{dP} (x_i)$;
    \hfill\CommentSty{// Log-likelihood ratio}

    \textbf{Set} $s_i \gets \pos{s_{i-1} + z_i}$;

    \eIf{$s_i \geq h$}{
        \textbf{Terminate};
        \hfill\CommentSty{// \Changepoint{} detected; stop}
    } {
        \textbf{Continue};
        \hfill\CommentSty{// Continue the loop on line 2}
    }
}
\end{algorithm}
Note that $\frac{dQ}{dP}(x_i)$ in \cref{algo:cusum} is the likelihood ratio as given by the Radon-Nikod{\'y}m derivative \cite{shiryaev1996minimax}, which equals $q(x_i) / p(x_i)$ when distributions $P,Q$ admit density or mass functions $p,q$.
The previously mentioned estimator for the change location can be retrieved by finding the largest $\nu$ such that $s_\nu = 0$.

\paragraph{Performance and optimality:}
The CUSUM procedure detects changes with small delay; its ability in this respect has been proven to be asymptotically optimal \cite{lorden1971procedures}.
To interpret the corresponding result, let us first properly define the used metric of detection delay.
Recall that $\E_\nu$ conditions on a change happening at location $\nu$, and $\E_\infty$ on no change happening.
\begin{definition}\label{def:delay}
    The Lorden detection delay \cite{lorden1971procedures}, \cite[Eq.~6.17]{tartakovsky2014sequential} is
    \begin{equation}\label{eq:delay}
        \delay{\pos{N-\nu}} = \sup_{\nu \geq 0} \esssup \E_\nu [(N-\nu)^+ \mid H_1, \ldots, H_\nu ],
    \end{equation}
    where maximization is first done over trajectories up to the \changepoint{}, and then its location.
\end{definition}
This notion of delay can be seen as a worst-case metric, and minimizing over it leads to a minimax problem.
In considering \changepoint{} detection over classical distributions, we use prior definitions of metrics with random variables $X_i$ in place of Hamiltonians $H_i$.

One may wonder the purpose of defining detection delay $\delay{\pos{N-\nu}}$ as in \eqref{eq:delay} instead of using a simpler quantity such as $\E_0[N]$ (i.e., assuming that the \changepoint{} happens immediately), considering that the two are equivalent for CUSUM.
The issue here is that small $\E_0[N]$ does not necessarily imply fast detection for later \changepoint{} locations, as $\E_\nu[N]$ can in general increase with $\nu$.
Indeed, it is trivial to construct a strategy (by removing the non-negativity restriction) having false termination time $\falsealarm{N} = \infty$ and very small $\E_0[N]$, yet one where detection delay is $\delay{\pos{N-\nu}} = \infty$ \cite{lorden1971procedures}.

\begin{lemma}[CUSUM optimality \cite{lorden1971procedures}]\label{lem:cusum}
    The detection delay of any stopping rule satisfies
    \begin{equation}\label{eq:cusum}
        \delay{\pos{N-\nu}} \geq \frac{\ln \falsealarm{N}}{D(Q \| P)} \cdot (1 + o(1)).
    \end{equation}
    Moreover, the CUSUM procedure is asymptotically optimal, as it attains a version of \eqref{eq:cusum} with equality.
\end{lemma}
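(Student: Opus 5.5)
We will establish \cref{lem:cusum} in two parts, following Lorden's argument: a lower bound valid for every stopping rule, and a matching upper bound for CUSUM. Throughout we assume i.i.d.\ observations with pre-change law $P$ and post-change law $Q$, and $0<D(Q\|P)<\infty$. Write $z_j=\ln\frac{dQ}{dP}(x_j)$, so that $\E_Q[z_j]=D(Q\|P)$.

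\textbf{Upper bound for CUSUM.} The key observation is that CUSUM is a sequence of repeated one-sided sequential probability ratio tests (SPRTs). Every time $s_i$ returns to $0$, a fresh SPRT of $Q$ against $P$ starts, with upper threshold $h$ and lower threshold $0$. We proceed in two steps.
\begin{itemize}
    \item \emph{False alarm time.} Under $P$, the process $e^{\sum z_j}$ is a nonnegative martingale with mean $1$. Ville's inequality then shows that each restarted test ever crosses $h$ with probability at most $e^{-h}$. Counting restarts as geometric trials gives $\falsealarm{N}\ge e^{h}$.
    \item \emph{Delay.} Since $s_i\ge 0$ always, the worst case over pre-change histories in \cref{def:delay} is attained at $s_\nu=0$. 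Indeed, $s$ is monotone in its starting value, so a larger $s_\nu$ can only hasten detection; this is exactly why the esssup is harmless for CUSUM. Starting from $0$ under $Q$, the increments have positive drift $D(Q\|P)$. Wald's identity, together with a bound on the overshoot (which requires finite second moment of $z$) and on the effect of resets (which occur with bounded expected count), yields $\E_0[N]\le h/D(Q\|P)+O(1)$.
\end{itemize}
Choosing $h=\ln\falsealarm{N}$ then gives equality in \eqref{eq:cusum} up to the factor $1+o(1)$.

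\textbf{Lower bound for any stopping rule.} We use a change-of-measure argument. Fix a rule with $\falsealarm{N}=\gamma$ and set $m=(1-\eta)\ln\gamma/D(Q\|P)$ for small $\eta>0$. Split time into consecutive windows of length $m$. Because $\E_\infty[N]=\gamma$, there is some window start $\nu$ such that $P_\infty(\nu<N\le\nu+m\mid N>\nu)$ is at most roughly $m/\gamma$. If the delay were at most $m$, then under $P_\nu$ detection would occur within the window with probability bounded away from $0$. On the event $\sum_{j=\nu+1}^{N} z_j\le (1+\eta)mD$, the likelihood-ratio change of measure from $P_\nu$ to $P_\infty$ shows that this probability is at most $e^{(1+\eta)mD}\cdot m/\gamma$, which tends to $0$. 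The complementary event has vanishing probability by the law of large numbers: $\max_{k\le m}\sum z_j/m\to D$ in probability. This is a contradiction, so the delay exceeds $(1-\eta)\ln\gamma/D$. Letting $\eta\to0$ gives \eqref{eq:cusum}.

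\textbf{Main obstacle.} The delicate step is the lower bound. Passing from the \emph{average} false-alarm constraint $\falsealarm{N}$ to a window with small conditional false-alarm probability needs care: the conditioning on $N>\nu$ and the essential supremum over pre-change trajectories must be handled so that the resulting bound holds uniformly. We would follow Lorden's device of reducing to a sequence of one-sided tests. In the paper's application, however, $X_i$ takes finitely many values, namely the outcomes of \cref{algo:ham_cert_sub}, so $z$ is bounded. This makes the overshoot and moment conditions trivial, and the lemma can largely be cited.
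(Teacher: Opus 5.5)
Your proposal is correct, but only its achievability half follows the route the paper relies on. The paper gives no proof of its own here. It cites Lorden's Theorem~2 for achievability, viewing CUSUM as restarted (or parallel) open-ended one-sided SPRTs analyzed with Wald's results, and Lorden's Theorem~3 for the converse. Your upper bound is exactly the Theorem~2 route: Ville's inequality with geometric counting of restarts gives $\falsealarm{N}\ge e^{h}$, and Wald's identity gives the delay. Your lower bound is a different, window-plus-change-of-measure argument, in the style of Lai's later generalization of Lorden's bound. You pigeonhole $\E_\infty[N]\ge\gamma$ into a window $(\nu,\nu+m]$ with conditional false-alarm probability of order $m/\gamma$. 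You then transfer to the post-change measure at cost $e^{(1+\eta)mD(Q\|P)}$, and dispose of the remaining event with the law of large numbers for post-change log-likelihood ratios. This gives a self-contained converse in which the $o(1)$ is visibly uniform over stopping rules. It uses only a law of large numbers for post-change scores, so it extends beyond the i.i.d.\ setting. The paper's citation is shorter but leaves all of this inside the reference.

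A few details to tighten:
\begin{itemize}
\item In the delay bound you need not track resets at all. Bounding their expected number would not suffice anyway: you would also need the expected length of each failed cycle bounded uniformly in $h$. Instead, for $i>\nu$ one has $s_i\ge s_\nu+\sum_{j=\nu+1}^{i}z_j\ge\sum_{j=\nu+1}^{i}z_j$. So on $\{N>\nu\}$, CUSUM stops no later than the one-sided SPRT started at $\nu+1$, which is independent of $\mathcal{F}_\nu$, and Wald's identity bounds the essential supremum directly. This is precisely the content of Lorden's Theorem~2.
\item In the converse, ``delay at most $m$'' with window length $m$ does not give a detection probability bounded away from zero via Markov's inequality. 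Argue directly that $\E_\nu[(N-\nu)^+\mid N>\nu]\ge m\Pr_\nu(N>\nu+m\mid N>\nu)=m(1-o(1))$. This is dominated by $\delay{\pos{N-\nu}}$ because $\{N>\nu\}\in\mathcal{F}_\nu$.
\item That same observation removes the ``main obstacle'' you flag, together with the fact that your law-of-large-numbers event involves only post-change samples. A lower bound needs only one $\nu$, not uniformity over pre-change trajectories.
\item ``Choosing $h=\ln\falsealarm{N}$'' is circular. Take $h=\ln\gamma$, or just use $h\le\ln\falsealarm{N}$.
\end{itemize}
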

\Cref{eq:cusum} and subsequent \changepoint{} results give the scaling of the detection delay $\delay{\pos{N-\nu}}$ with average run length $\falsealarm{N}$ when no change occurs.
Optimality follows from \cite[Theorem~3]{lorden1971procedures}, and achievability from \cite[Theorem~2]{lorden1971procedures} and the subsequent remark.
In particular, the latter theorem equates CUSUM to a series of parallel open-ended Sequential Probability Ratio Tests (SPRTs) \cite{wald1945sequential} and leverages known results \cite{wald1947sequential} for their analysis.

\Cref{lem:cusum} is primarily a statement about the scaling of the delay $\delay{\pos{N-\nu}}$ with respect to the false-alarm length $\falsealarm{N}$, where the latter can be exponentially larger than the former.
In CUSUM, this tradeoff is facilitated by the threshold parameter $h$, as we have $\falsealarm{N} \geq \exp(h)$ and \eqref{eq:cusum} holds with $h$ in place of $\ln \falsealarm{N}$.
In plain terms, such a tradeoff is made possible by the cumulative score $z_i$ increasing on average under $Q$, but the same steady increase being exponentially unlikely under $P$.

\paragraph{Robustness:}
The requirement of needing the distributions $X_1, X_2, \ldots$ to be i.i.d.~(aside from changing at the \changepoint{}) and exactly known is rather strict.
Indeed, any uncertainty in either the Hamiltonian norm or the certification procedure leads to a range of possible distributions for each $X_i$, which are independent but not necessarily identical.
Specifically, we need to consider cases where the distributions can vary with time $i$ and the \changepoint{} location $\nu$,
\begin{itemize}[leftmargin=8em]
    \item[(\Changepoint{}):] $X_1, X_2, \ldots \sim P_1^{(\nu)}, \ldots, P_\nu^{(\nu)}, Q_{\nu+1}^{(\nu)}, Q_{\nu+2}^{(\nu)}, \ldots$\ ,
    \item[(No \changepoint{}):] $X_1, X_2, \ldots \sim P_1, P_2, \ldots$\ .
\end{itemize}
Instead of knowing these exactly, we instead have that pre-change distributions belong to some class $P_i, P_i^{(\nu)} \in \mathcal{P}$ and the post-change ones to another class $Q_i^{(\nu)} \in \mathcal{Q}$.

Fortunately, known approaches in literature \cite{unnikrishnan2011minimax,molloy2017misspecified,liang2022quickest,molloy2019minimax} can show that CUSUM is robust in this scenario.
Suppose that the distribution classes can be separated via $\mathcal{P} \preceq \overline{P} \preceq \underline{Q} \preceq \mathcal{Q}$, then $\overline{P}$ and $\underline{Q}$ are least-favorable distributions (LFDs) that can be used as parameters in CUSUM and its analysis.
Here, $A \succeq B$ indicates that the likelihood ratio $\frac{dQ}{dP}(X)$ (or equivalently, its logarithm $Z$) is stochastically larger \cite{veeravalli1994minimax} under $A$ than under $B$, meaning
\begin{equation}\label{eq:stochastic-order}
    \Pr[Z > x \mid X \sim A] \geq \Pr[Z > x \mid X \sim B] \qquad \forall x \in \mathbb{R}.
\end{equation}

In simple terms, replacing $P$ by some $P_i \preceq P$ at any one timestep by definition can only reduce the scores at that step, increasing the false-alarm time $\falsealarm{N}$.
Similarly, replacing $Q$ by some $Q_i \succeq Q$ increases scores and shortens the detection delay $\delay{\pos{N-\nu}}$.
As such, it suffices to analyze CUSUM for a fixed choice of LFDs, which provides run length bounds for the general case.
Note lastly that in the case of dealing with Bernoulli distributions $\Ber(\theta)$ that produce `1' with probability $\theta$ and `0' with probability $1-\theta$, $\Ber(a) \succeq \Ber(b)$ is equivalent to $a \geq b$ (so long as $Z(1) \geq Z(0)$).

For completeness, we provide the exact used statement and a self-contained proof in \cref{appx:cusum-robust}, noting that this result was previously shown in \cite[Theorem~1]{molloy2019minimax}.

\subsection{Hamiltonian \changepoint{} algorithm and analysis}

It remains to apply CUSUM to our single-sample subroutine for Hamiltonian certification in \cref{algo:ham_cert_sub}.
In particular, we denote the outcomes of this subroutine as $x_i$.
The resulting procedure is shown in \cref{algo:ham_changepoint}.

\begin{algorithm}[htbp]
\caption{Hamiltonian \changepoint{} detection}
\label{algo:ham_changepoint}
\KwIn{System size $n$, upper bound on the Hamiltonian operator norm $M$, classical description of the $n$-qubit $H_0$, time-evolution access to the unknown $n$-qubit $H$, threshold $\epsilon$, log-likelihood threshold $h$.}
\KwOut{\textbf{Continue} while $\norm{H-H_0}_F \leq \bigO(\epsilon/\sqrt{n})$, \textbf{Terminate} if $\norm{H-H_0}_F \geq \epsilon$}

\textbf{Set} $\xi \gets \Omega(\epsilon^2 / M^2)$;

\textbf{Initialize} CUSUM (\cref{algo:cusum}) with $P \gets \Ber(\xi/2n), Q \gets \Ber(\xi/n), h$;

\For{$i=1,2,\ldots$}{
    \textbf{Run} $x_i \gets\,$ Hamiltonian certification subroutine (\cref{algo:ham_cert_sub});

    \textbf{Send} $x_i$ to CUSUM;

    \eIf{\textup{CUSUM terminates}}{
        \textbf{Terminate};
        \hfill\CommentSty{// \Changepoint{} detected; stop}
    } {
        \textbf{Continue};
        \hfill\CommentSty{// Continue the loop on line 3}
    }
}
\end{algorithm}
Performance analysis for this algorithm combines our earlier derivations with aforementioned properties about CUSUM bounds and robustness.

\begin{proof}[Proof of \cref{thm:online}]
    Recall that by \cref{def:ham_changepoint} of the Hamiltonian \changepoint{} problem, we need to distinguish between the following cases: the no-change case with $\norm{H_0 - H}_F \leq C \epsilon/\sqrt{n}$ for all $i$ (where $C$ is a universal constant that we choose later), and the \changepoint{} case with $\norm{H_0 - H}_F \geq \epsilon$ following the \changepoint{}.
    From the discussion following \eqref{eq:fidelity-bound} and \eqref{eq:in-fidelity-bound}, the subroutine in \cref{algo:ham_cert_sub} has state infidelity at least $\xi$ in the \changepoint{} case with $\xi = \Omega(\epsilon^2/M^2)$; by choosing a small enough $C$ we can ensure infidelity is at most $\xi/2n$ in the no-change case.
    Correspondingly, the adaptive measurement procedure in \cref{algo:state_cert_sub} has rejection probability at least $\xi/n$ and at most $\xi/2n$ \cite[Theorem~1]{gupta2025single}.
    That is, we have
    \begin{align}
        \mathcal{P} &= \{\Ber(\theta) : \theta \leq \xi / 2n\},&
        \mathcal{Q} &= \{\Ber(\theta) : \theta \geq \xi / n\},
    \end{align}
    and by CUSUM robustness it suffices to analyze the case with $P = \Ber(\xi/2n)$ and $Q = \Ber(\xi/n)$.

    We proceed with a linearization bound for the KL divergence $D(Q \| P) \geq k \xi / n$ with $k = \ln(2) - 1/2$, which can be seen from its Taylor expansion.
    By the results of \cref{lem:cusum}, CUSUM then satisfies
    \begin{equation}
        \delay{\pos{N-\nu}}
        = \frac{\ln \falsealarm{N}}{D(Q \| P)} \cdot (1 + o(1))
        \leq \frac{n \ln \falsealarm{N}}{k \xi} \cdot (1 + o(1))
        = \bigO\left( \frac{n M^2 \ln \falsealarm{N}}{\epsilon^2} \right).
    \end{equation}
    A similar result for total evolution time then follows from the definitions $\delay{\pos{T - t\nu}} = t \delay{\pos{N - \nu}}$ and $\falsealarm{T} = t \falsealarm{N}$ in \cref{def:ham_changepoint}, as well as $t = c M^{-1}$.
\end{proof}

\section{Numerical experiments}

This section supplements the theoretical analysis of the proposed algorithms with numerical simulations.
In doing so, we aim to demonstrate their effectiveness by showcasing a targeted use case.

In the following experiments, we fix the target Hamiltonian to be the Rydberg Hamiltonian,
\begin{equation}
    H_0 = \frac{\Omega}{2}\sum_i X_i - \Delta \sum_i N_i + \Omega \sum_{i<j} \left(\frac{R_b}{a\abs{i-j}}\right)^6 N_iN_j,
\end{equation}
where $N_i \coloneqq \kb{r_i}$ is the projector onto the Rydberg state with $Z_i = \kb{g_i} - \kb{r_i}$, and $\{\Omega, \Delta, R_b, a\}$ are constants that we specify below.
Computations are carried out using the \textsc{qiskit}~\cite{qiskit} package.
All code used for the numerical simulations is publicly available on GitHub~\cite{github_repo}.

\paragraph{Hamiltonian certification:}

We plot the results of running \cref{algo:adaptive_ham_certif} and corresponding certification verdicts for different $\norm{H - H_0}_F$ in the left and right panels of \cref{fig:hamiltonian_certification}, respectively.
The Hamiltonian is not certified when the fraction of negative algorithm outcomes is greater than $10^{-4}$.

\begin{figure}[htbp]
    \centering
    \includegraphics[width=0.8\textwidth]{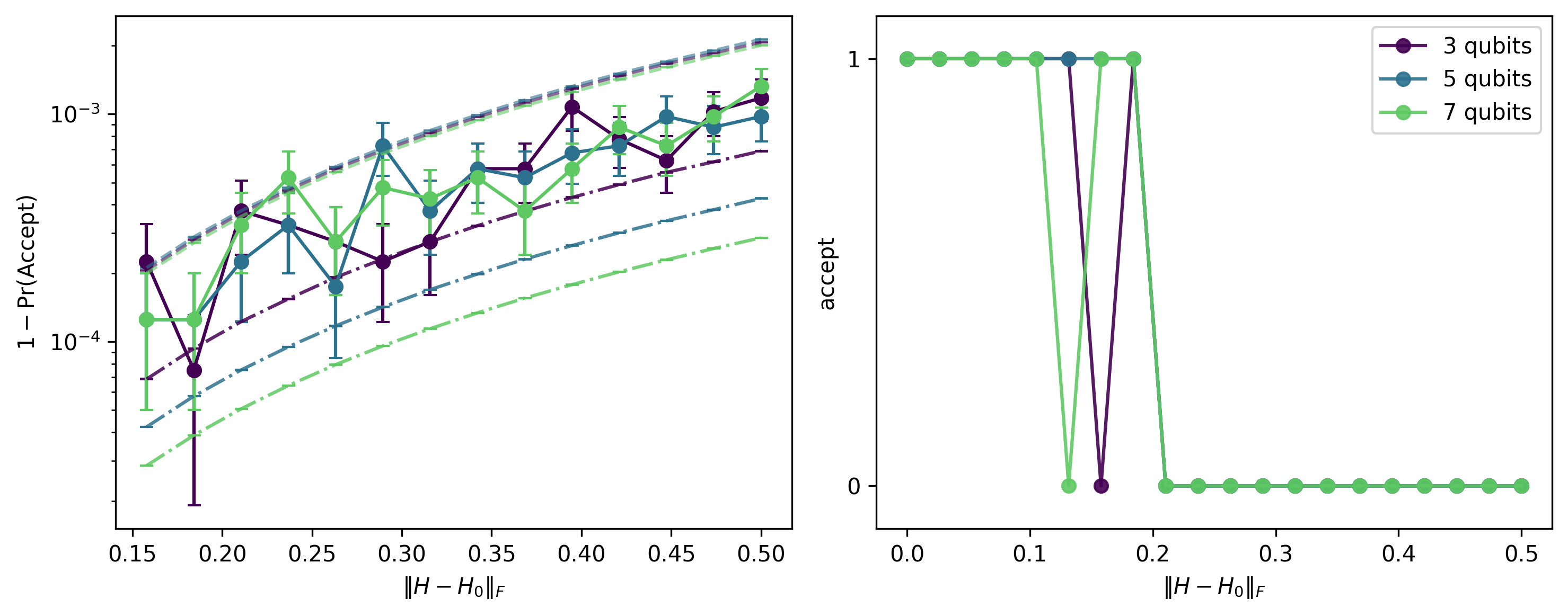}
    \caption{%
        (left) Results of running \cref{algo:adaptive_ham_certif} given by the probability of accepting the Hamiltonian $H$ on logarithmic scale.
        (right) Corresponding certification verdicts.
        The Hamiltonian is rejected when the fraction of negative outcomes is greater than $10^{-4}$.
        The acceptance probabilities are calculated as the number of positive outcomes in the $2 \times10^4$-element array we obtain for each pair $(n, \norm{H-H_0}_F)$.
        The error bars correspond to one Wilson interval.
        $\norm{\cdot}_F$ denotes the normalized Frobenius norm.
    }
    \label{fig:hamiltonian_certification}
\end{figure}

We consider systems with $n=3$, 5, and 7 qubits, fixing $\Omega=1, \Delta=2.5, R_b=1.5 $, and $a=1$.
The evolution time $t$ is set to $0.1$ throughout.
Each data point in \cref{fig:hamiltonian_certification} is a result of running \cref{algo:state_cert_sub} with $N=2 \times 10^4$.
We consider $20$ distinct values of $\norm{H-H_0}_F$ for each $n$, which amounts to the total of $4 \times 10^5$ experiments for every system size.
For every $n$, we fix a Hermitian perturbation of unit norm and add its scalar multiple to $H_0$, thus controlling $\norm{H-H_0}_F$.
We then evolve a random stabilizer product state $\ket{\psi_0}$ under $e^{-i t H_0}$ and $e^{-i t H}$, and run \cref{algo:state_cert_sub} with $\ket{\text{hyp}} = e^{-i t H_0} \ket{\psi_0}$ and $\ket{\text{lab}} = e^{-i t H} \ket{\psi_0}$.

From \cref{fig:hamiltonian_certification}, we observe that our results are consistent with the lower bounds for the acceptance and rejection probabilities of \cref{algo:state_cert_sub} as given in~\cite[Theorem~1]{gupta2025single}.
We also observe that the algorithm performs similarly across system sizes, suggesting that the rejection lower bound in~\cite{gupta2025single} could potentially be made independent of $n$ for typical instances.
From the right panel of~\cref{fig:hamiltonian_certification}, we see that the algorithm confidently detects perturbations of size $\ge 0.2$.
This is consistent with~\eqref{eq:fidelity-bound} since $\frac{2}{3}t^2 \norm{H-H_0}_F^2$ with $\norm{H-H_0}_F = 0.2$ and $t=0.1$ is on the order of the used threshold, $10^{-4}$.

\paragraph{Hamiltonian \changepoint{} detection:}

\Cref{fig:cusum_example} shows example executions of the Hamiltonian \changepoint{} procedure.
Here, it is applied to a Hamiltonian that perpetually accumulates small changes, but has a possibility of experiencing large disturbances for a small period of time.
Correspondingly, the normalized Frobenius norm either stays relatively low, or rapidly increases to a higher value.
When such a change occurs, the cumulative value of the CUSUM subroutine begins to slowly but steadily increase, as shown in the left-hand plot.
Otherwise, this value stays relatively low, as in the right-hand plot.
A key takeaway of this example is that we can easily pick a cumulative value threshold (e.g., $h = 3$) that is crossed in the left plot but not the right one.
This ensures both that changes are detected rapidly, and that the rate of false alarms is minimal.

\begin{figure}[htbp]
    \centering
    \includegraphics[width=\textwidth]{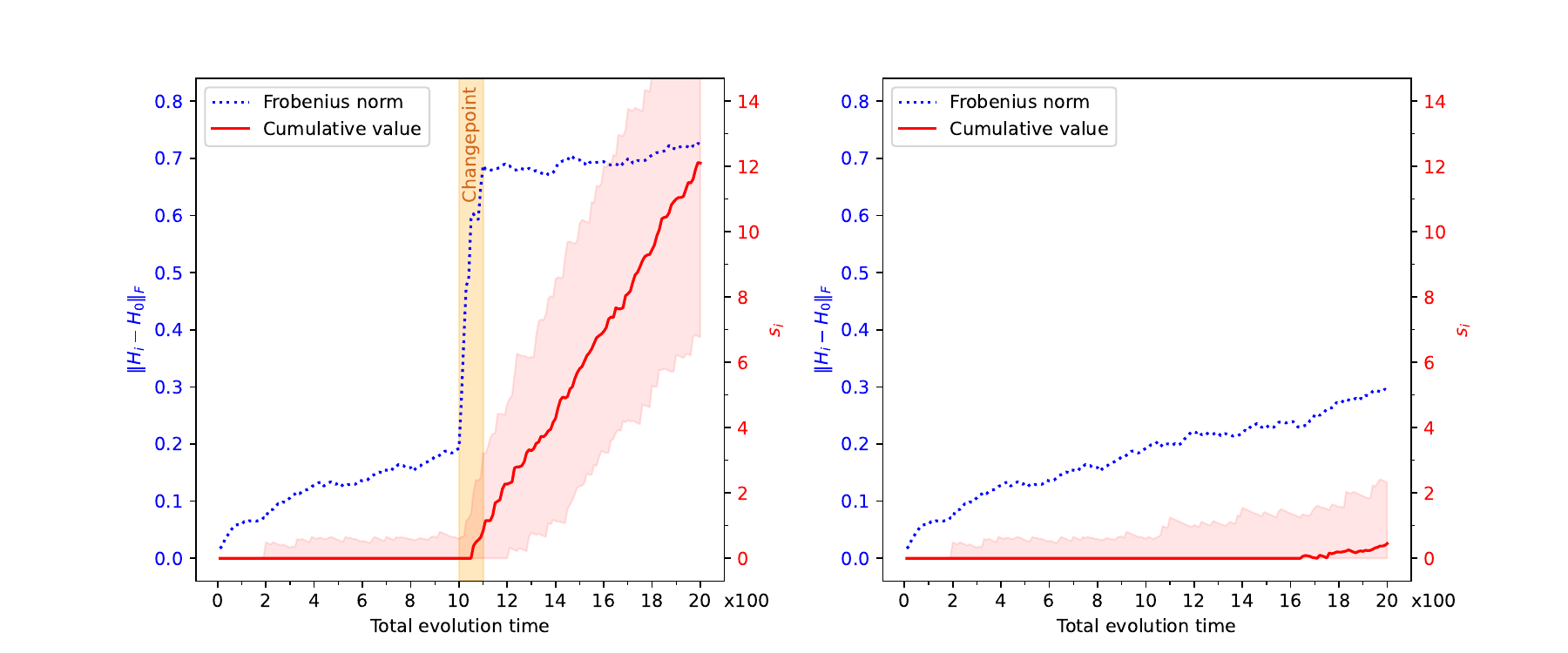}
    \caption{%
        Results of running \cref{algo:ham_changepoint} on a sequence of Hamiltonians that experiences
            (left) a significant change and
            (right) only small changes,
        with the blue dotted series showing Hamiltonian deviation over time.
        The red series is the cumulative sum $s_i$ aggregated by CUSUM (median and $95\%$ interpercentile range across $120$ trials).
    }
    \label{fig:cusum_example}
\end{figure}

The experiment in \cref{fig:cusum_example} was carried out with the parameters $n=3$ and $\xi=0.002$.
Hamiltonian perturbations are sampled from the Gaussian unitary ensemble, scaled by a factor of $0.1$ during the $1$-unit-wide \changepoint{} window and by $0.01$ outside of it.
To manage small rejection probabilities, $s=100$ shots were used for each experiment, which is accounted for in the figure via the multiplicative factor on the total evolution time axis.
In particular, each experiment uses the same stabilizer product state $\ket{\psi_0}$ but repeats the state certification subroutine in \cref{algo:state_cert_sub} independently $s$ times (meaning the values of $k$, $x$, and paths taken through the adaptive basis $\mathcal{M}_x$ can differ).
CUSUM is then performed with binomial distributions instead of Bernoulli, where the log-likelihood ratio $\ln \frac{dQ}{dP}(x_i)$ in \cref{algo:cusum} evaluates to $x_i \ln\bigl(\frac{q}{p}\bigr) + (s-x_i) \ln\bigl(\frac{1-q}{1-p}\bigr)$ when $x_i$ rejections are obtained.

In considering specifically the termination time of such CUSUM runs, we obtain the plots in \cref{fig:cusum_arl}.
The left-hand plot shows the expected sequence length sampled by \cref{algo:cusum} before termination, as a function of the actual rejection probability.
The dashed vertical lines show two hypotheses for this rejection probability; focusing on how the termination threshold $h$ affects the run length at these lines illustrates a point previously made in \cref{lem:cusum}.
As the average run length scales approximately linearly with $h$ at the $q$ line, the individual series (each incrementing $h$ by $5$) become increasingly crowded on the log-scale for larger values of $h$.
These series are spread out at the $p$ line, indicating roughly exponential scaling with $h$.%

The right-hand plot of \cref{fig:cusum_arl} demonstrates a similar principle of larger deviations resulting in shorter run lengths when applied to Hamiltonian change detection.
Error bars on the plot capture variability over several choices of $H$, as perturbations having the same norm difference $\norm{H-H_0}_F$ yield a range of possible rejection probabilities.
Note that the intervals shown for the smallest and largest norm values do not overlap, meaning these cases are distinguishable even when the perturbed Hamiltonian $H$ is chosen adversarially.
In the context of this plot, $H$ remains fixed throughout each CUSUM run.
Expectation is over all variables except $H_0$ and $H$---that is, the initial state $\ket{\psi_0}$, the choice of $k$, and all measurement outcomes (and thus the computational basis outcome $x$, the path through the adaptive basis $\mathcal{M}_x$, and the CUSUM scores sequence).

\begin{figure}[htbp]
    \centering
    \includegraphics[width=\textwidth]{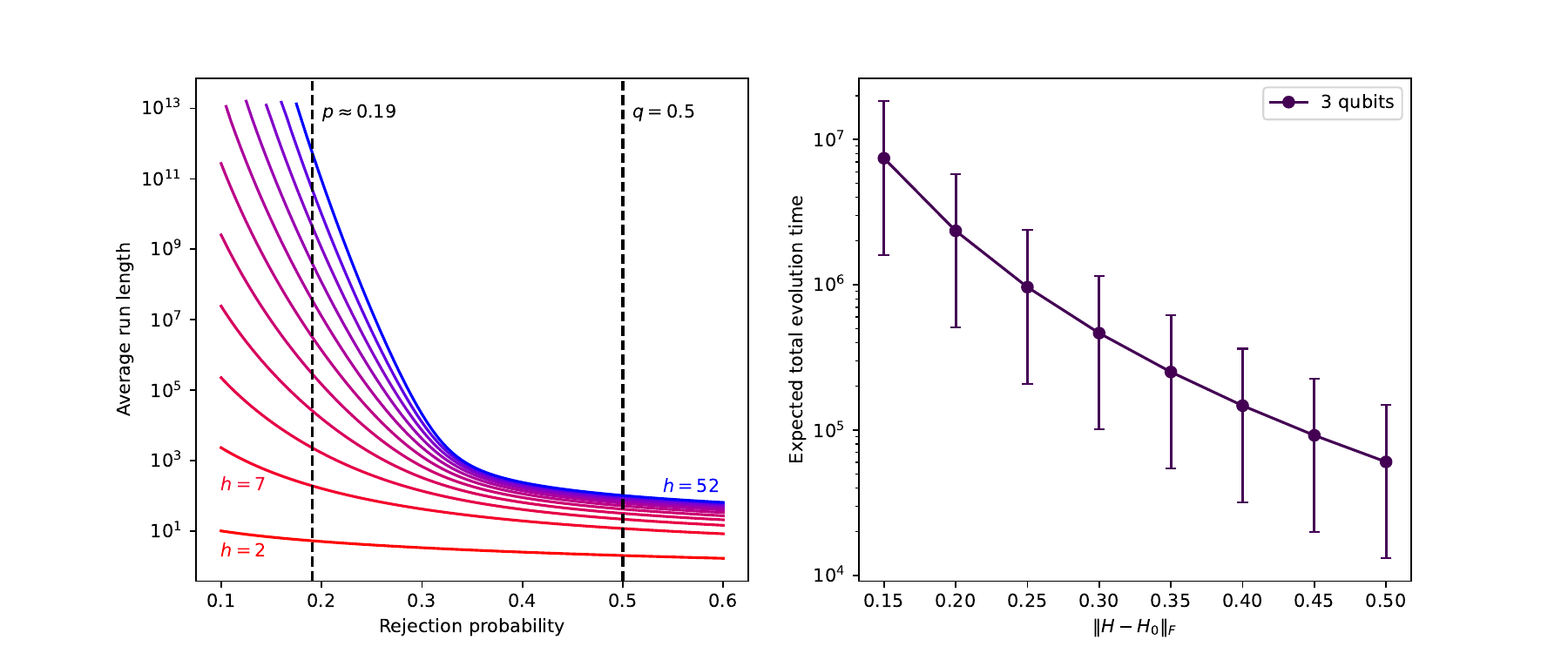}
    \caption{%
        Expected termination time of CUSUM when
            (left) run directly, and
            (right) used in \cref{algo:ham_changepoint} for Hamiltonian change detection.
        The right-hand plot shows the median and $95\%$ interpercentile range across $120$ trials.
    }
    \label{fig:cusum_arl}
\end{figure}

The large average run lengths shown in \cref{fig:cusum_arl} are intractable to compute by direct simulation, but are instead possible to obtain thanks to specialized SPRT analysis in \cite[Appendix~A.4]{wald1947sequential} and relation to CUSUM by \cite[Eq.~(9)]{page1954continuous}.
This concerns a special case where all possible values of the per-round score $z_i$ are integer multiples of some constant.
Specifically, in the above plots we use $\frac{dQ}{dP}(\text{Reject}) = -2 \frac{dQ}{dP}(\text{Accept})$.
Denoting by $p$ and $q$ the rejection probabilities in the distributions $P = \Ber(p)$ and $Q = \Ber(q)$,
we can obtain this by, for instance, choosing $q = 0.5$ and $p = 1 - \varphi/2 \approx 0.19$ with $\varphi = \frac{1+\sqrt{5}}{2}$ being the golden ratio.
This entails $\frac{dQ}{dP}(\text{Reject}) = \ln\bigl(\frac{q}{p}\bigr) = 2 \ln(\varphi)$ and $\frac{dQ}{dP}(\text{Accept}) = \ln\bigl(\frac{1-q}{1-p}\bigr) = -\ln(\varphi)$, and subsequent constants (i.e., $h$) are rescaled to hide the $\ln(\varphi)$ multiple.
The left-hand plot uses rejection thresholds $h$ from $2$ to $52$ in increments of $5$, and the right-hand plot uses $h = 3$.

\section*{Acknowledgments}
STF, JS, AZ were supported in part by the Commonwealth Cyber Initiative, an investment in the advancement of cyber R\&D, innovation, and workforce development.
M.M acknowledges support from the Institute for Quantum Information and Matter, an NSF Physics Frontiers Center.
STF was supported in part by the U.S.\ Department of Energy, Office of Science, National Quantum Information Science Research Centers, Co-design Center for Quantum Advantage under contract number DE-SC0012704.

\appendix
\phantomsection
\addcontentsline{toc}{section}{References}
\bibliographystyle{alpha-links}
\bibliography{Arxiv_refs}

\section{Robustness of CUSUM}\label{appx:cusum-robust}

The goal of this section is to prove a (perhaps intuitive) result saying that using ``worse'' distributions can only shorten the termination time, and using ``better'' ones only lengthens it.
Similar results (with corresponding lower bounds) have previously appeared in literature for unknown distributions belonging to known classes \cite[Theorem~III.2]{unnikrishnan2011minimax} \cite[Theorem~3]{molloy2017misspecified} and moreover for ones that may be different at each time step \cite[Section~III]{liang2022quickest} \cite[Theorem~1]{molloy2019minimax}---the latter of these matching the below statement.

\begin{lemma}[Following {\cite[Theorem~1]{molloy2019minimax}}]\label{lem:cusum-robust}
    Consider applying CUSUM in two situations with random variables $X_1, \ldots, X_{i-1}, X_{i+1}, \ldots$ following the same (but not necessarily identical) distributions, but $X_i$ following $u$ in one case and $u^\prime$ in the other.
    Denoting the average run lengths of the two situations as $L$ and $L'$, we have
    \begin{subequations}\label{eq:cdf-ineq}
    \begin{align}
        L^\prime &\leq L & \text{when}\quad u^\prime &\succeq u,\label{eq:cdf-ineq-a}\\
        L^\prime &\geq L & \text{when}\quad u^\prime &\preceq u,\label{eq:cdf-ineq-b}
    \end{align}
    \end{subequations}
    with stochastic ordering as defined in \eqref{eq:stochastic-order}.
\end{lemma}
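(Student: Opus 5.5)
We argue by coupling the two CUSUM runs on a common probability space so that they share every observation except the $i$-th one. The key structural fact is that the CUSUM recursion $s_j = \pos{s_{j-1} + z_j}$ is monotone: the map $(s,z)\mapsto \pos{s+z}$ is nondecreasing in both arguments. Hence if two score sequences satisfy $z'_j \geq z_j$ for all $j$ (with equality for $j\neq i$), then by induction $s'_j \geq s_j$ for every $j$. The stopping time $N=\inf\{j : s_j \geq h\}$ is then pathwise smaller, $N' \leq N$, which after taking expectations gives $L'\leq L$. The case $u'\preceq u$ follows from the same argument with the roles of the two situations swapped, so we only treat \eqref{eq:cdf-ineq-a}.

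To realize this coupling, we work with the scores directly rather than the observations. The statistic depends on $X_i$ only through $Z_i = \ln\frac{dQ}{dP}(X_i)$, and \eqref{eq:stochastic-order} says that the law of $Z_i$ under $u'$ stochastically dominates its law under $u$. We use the standard quantile coupling. Take $U\sim\mathrm{Unif}(0,1)$, independent of all other $X_j$, and set $Z_i = F_u^{-1}(U)$ and $Z_i' = F_{u'}^{-1}(U)$, where $F^{-1}$ denotes the generalized inverse of the CDF of the score. Stochastic dominance gives $F_{u'}\leq F_u$ pointwise, so $F_{u'}^{-1}\geq F_u^{-1}$, and therefore $Z'_i\geq Z_i$ almost surely. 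The marginal laws are correct, and independence from the other observations is preserved. Since the run length is a measurable function of the score sequence alone, its distribution in each situation is the same as under the coupling.

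The step needing the most care is confirming that the expectations compare properly, including when they are infinite. This is immediate from pathwise monotonicity: $N'\leq N$ almost surely implies $\E[N']\leq\E[N]$ in $[0,\infty]$. We also check that nothing depends on the time index or on conditioning. Because the other $X_j$ are merely independent and not identical, the coupling holds with their laws arbitrary. If the run length is conditioned on a pre-change trajectory, as in the Lorden delay, the same pathwise argument applies conditionally, so it also extends to $\delay{\pos{N-\nu}}$. Finally, iterating the single-step replacement over each time index, with a limiting argument via monotone convergence for infinitely many indices, gives the class-wide statement used with the least-favorable distributions $\overline{P},\underline{Q}$.
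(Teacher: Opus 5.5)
Your proof is correct, but it is organized differently from the paper's. The paper splits the run at step $i$, in three moves:
\begin{enumerate}
\item It proves a head-start lemma: a larger initial score $S_0$ terminates pathwise no later, so the continuation run length $l(s)$ is nonincreasing in $s$.
\item It uses that $S_{i-1}$ has the same law in both situations to transfer the pointwise shift of $F_{Z_i}$ to $F_{S_i}(s)=\E[F_{Z_i}(s-S_{i-1})]$.
\item It writes $L=\int l(s)\,\mu_{S_i}(ds)=\int_0^\infty \mu_{S_i}(A_x)\,dx$, where each $A_x=\{s\geq 0 : l(s)\geq x\}$ is an initial interval, so the CDF comparison yields $L'\leq L$.
\end{enumerate}

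In effect the paper uses the ``integrate a monotone function'' characterization of first-order stochastic dominance. You use the equivalent ``monotone coupling'' characterization and push $Z_i'\geq Z_i$ through the recursion in a single pathwise induction. Both arguments rest on the monotonicity of $(s,z)\mapsto\pos{s+z}$.

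Your route buys three things.
\begin{itemize}
\item It sidesteps bookkeeping the paper leaves implicit: runs that stop before or at step $i$, the $i$ elapsed steps, the value of $l(s)$ for $s\geq h$, and conditioning $F_{S_i}$ on survival.
\item It gives $N'\leq N$ almost surely. This is stochastic ordering of the run lengths themselves, not only of their means.
\item It handles all indices at once. Drawing independent uniforms $U_j$ at every modified index couples all replacements simultaneously. This makes both your final iteration-plus-monotone-convergence step and the paper's ``induction and limiting behavior'' unnecessary.
\end{itemize}

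One small caution on your Lorden-delay remark: ``applies conditionally'' is literally correct only when the modified index lies after $\nu$. For an index at or before $\nu$, the law of the conditioning trajectory itself changes. In that case, argue instead that under the coupling the conditional delay given the trajectory is pointwise smaller, since either $N'\leq\nu$ or $S_\nu'\geq S_\nu$ with the continuation delay nonincreasing. Hence the essential suprema are ordered.

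The paper's route, in turn, isolates the head-start monotonicity as a standalone lemma and never needs to construct a joint probability space.
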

While we do not explicitly require $u'$ to be absolutely continuous with respect to $u$ (i.e., $u' \ll u$), this assumption is implicit.
In particular, the condition of \cref{eq:cdf-ineq-b} prohibits $u' \not\ll u$, while in \cref{eq:cdf-ineq-a} it produces an infinite score that makes CUSUM terminate immediately.

By induction (and limiting behavior over a closed condition), the same average run length (ARL) inequalities hold when each of the distributions differs, so long as the right-hand conditions in \eqref{eq:cdf-ineq} are satisfied for each changed distribution.

Unlike SPRT, the average run length of CUSUM on Bernoulli distributions $\Ber(\theta)$ is monotonic with respect to $\theta$.
This makes sense on an intuitive level---a procedure that terminates only when getting defects frequently enough runs longer when defects are less frequent and shorter when more frequent.
This behavior also allows us to work with distributions that are not identical (but still independent); having individual $\theta_t \geq \theta$ decreases $L$, and $\theta_t \leq \theta$ increases $L$.
We follow a similar line of reasoning for general distributions.

We begin with an observation: giving CUSUM a head start by increasing the initial cumulative score $S_0$ can only shorten its expected runtime.

\begin{lemma}\label{lem:monotonic-z}
    Suppose the distributions of scores $u_i$ are independent and fixed.
    Let $l(x)$ denote the average run length of CUSUM when starting with the initial cumulative score $S_0 = x$.
    We have
    \begin{align}
        l(y) &\leq l(x) & \text{when}\quad y &\geq x.
    \end{align}
\end{lemma}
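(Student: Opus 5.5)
We will prove the statement by a pathwise coupling argument rather than by manipulating expectations directly. Fix a single realization of the independent scores $z_1, z_2, \ldots$ (their distributions $u_i$ are fixed and do not depend on the starting value). Run CUSUM twice on this same realization: once from $S_0 = x$, producing the trajectory $s_i^{(x)} = \pos{s_{i-1}^{(x)} + z_i}$, and once from $S_0 = y$, producing $s_i^{(y)}$. The approach is to show that the trajectory started at $y$ stays above the one started at $x$, so it must cross the threshold no later. Since $l(x)$ and $l(y)$ are expectations of run lengths over the same probability space, a pointwise inequality between stopping times then gives the inequality between their averages.

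\emph{Step 1 (monotonicity of the update).} The map $f_z(s) = \pos{s + z}$ is nondecreasing in $s$ for each fixed $z$. This holds because $s \mapsto s + z$ is increasing and $\pos{\cdot} = \max(\cdot, 0)$ is nondecreasing.

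\emph{Step 2 (induction).} By induction on $i$, $s_0^{(y)} = y \geq x = s_0^{(x)}$ together with Step 1 gives $s_i^{(y)} \geq s_i^{(x)}$ for every $i \geq 0$.

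\emph{Step 3 (comparison of stopping times).} Let $N^{(x)} = \inf\{i \geq 1 : s_i^{(x)} \geq h\}$, and define $N^{(y)}$ in the same way. By Step 2, at the index $i = N^{(x)}$ we have $s_i^{(y)} \geq s_i^{(x)} \geq h$. Hence $N^{(y)} \leq N^{(x)}$ pointwise, and this includes the case $N^{(x)} = \infty$. If the convention is that $s_0 \geq h$ terminates immediately, that case only strengthens the inequality.

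\emph{Step 4 (taking expectations).} Taking expectations of the nonnegative, possibly infinite, random variables $N^{(y)} \leq N^{(x)}$ gives $l(y) \leq l(x)$. This uses the monotonicity of the expectation of extended-real nonnegative variables, which remains valid when $l(x) = \infty$.

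The only delicate point is ensuring that the coupling is legitimate. The scores $z_i$ have to be generated independently of the initial state, which is exactly the hypothesis that the distributions $u_i$ are independent and fixed. Because of this, both runs can be defined on the same sample space. Once that is in place, the argument reduces to the monotone-map induction in Steps 1 and 2, which I expect to be routine.
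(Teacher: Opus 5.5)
Your proposal is correct and follows essentially the same route as the paper: fix a realization of the observations, show $s_i(y) \geq s_i(x)$ for all $i$ by induction using that $s \mapsto \pos{s+z}$ is nondecreasing, conclude that the run started from $y$ terminates no later, and average over realizations. Your version only adds detail the paper leaves implicit, namely the stopping-time comparison (including the case $N = \infty$) and monotonicity of expectation for extended nonnegative random variables.
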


\begin{proof}
    Fix a sampled sequence of observations $x_1, x_2, \ldots$\ .
    By induction, for any $i$ we have $s_i(y) \geq s_i(x)$.
    As such, this sequence can only terminate earlier when starting with $y$ as opposed to with $x$.
    The sequences are independent of the initial starting score, so the relation holds for ARL $l$ by averaging over such sequences.
\end{proof}

This property lets us focus on analyzing the effect of changing the distribution of scores at a single timestep, propagating the changes to the subsequent step and evaluating the ARL using the above property.

\begin{proof}[Proof of \cref{lem:cusum-robust}]
    As $X_1$ through $X_{i-1}$ follow identical distributions in the two scenarios, the cumulative scores $S_{i-1}$ at the end of step $i-1$ have the same distributions in both scenarios.
    Since the distribution of $X_i$ on the other hand differs, consider the cumulative score at step $i$ is given by $S_i = \pos{S_{i-1} + Z_i}$.
    Denoting cumulative distribution functions by $F$, we begin with an observation that stochastic ordering as defined in \cref{eq:stochastic-order} causes $F_{Z_i}$ to either decrease or increase pointwise.
    Specifically, $u^\prime \succeq u$ is equivalent to $\Pr[ Z_i > x \mid X_i \sim u^\prime] \geq \Pr[ Z_i > x \mid X_i \sim u]$ and hence $1 - F_{Z_i \mid X_i \sim u^\prime}(x) \geq 1 - F_{Z_i \mid X_i \sim u}(x)$; and analogously for $u^\prime \preceq u$.
    We see that $F_{S_i}(s) = \E[F_{Z_i}(s - S_{i-1})]$ for $s \geq 0$ (and $0$ otherwise) due to $S_{i-1}$ and $Z_i$ being independent, meaning that these directional changes in $F_{Z_i}$ are reflected in $F_{S_i}$.
    Although not necessary for the proof, we remark that when $S_{i-1}$ is discrete or continuous, the same conclusion is revealed by $F_{S_i} = (f_{S_{i-1}} \ast F_{Z_i}) \mathbf{1}_{\{s \geq 0\}}$ (where $\mathbf{1}$ is the indicator function) being monotone in $F_{Z_i}$.

    It remains to relate such changes to the average run length $L$.
    Let $l(s)$ be defined as in \cref{lem:monotonic-z} with distributions following the end of step $i$.
    Denoting by $\mu_{S_i}$ the probability measure of $S_i$, we have
    \begin{equation}
        L
        = \int_{[0,\infty)} l(s)\, \mu_{S_i}(ds)
        = \int_0^\infty \mu_{S_i} (\{ s \geq 0 : l(s) \geq x \}) dx
    \end{equation}
    Let $A_x \coloneqq \{ s \geq 0 : l(s) \geq x \}$; by \cref{lem:monotonic-z} each $A_x$ is an interval with $0$ as the left endpoint.
    We then have that $F_{S_i \mid X_i \sim u'} \leq F_{S_i \mid X_i \sim u}$ pointwise implies $\mu_{S_i \mid X_i \sim u'}(A_x) \leq \mu_{S_i \mid X_i \sim u}(A_x)$ and hence $L' \leq L$.
    Analogous reasoning leads to $L' \geq L$ in the other case.

    In particular, let us define $l^\gets(x) \coloneqq \sup A_x$ (analogous to an inverse of $l$).
    By \Cref{lem:monotonic-z}, $A_x$ is $[0, l^\gets(x)]$ (or $[0, l^\gets(x))$ if the supremum is not attained), and we have
    \begin{equation}
        L = \int_0^\infty F_{S_i} (l^\gets(x)) \, dx,
    \end{equation}
    or the same with $l^{\gets}(x)^-$ in the case of an open endpoint.
    Hence, pointwise decreasing or increasing $F_{Z_i}$ leads to corresponding changes in $F_{S_i}$ and $L$.
\end{proof}

\section{Classical certification complexity}\label{section:classical_certification_complexity}
Here, we estimate the classical computation complexity of measurement in the DT for the \cref{algo:state_cert_sub} in the context of Hamiltonian certification.
The hypothesis state is
\begin{equation}\label{eq:hyp_ham}
    \ket*{\tilde\phi_0}
    = \sum_{j=0}^\mathbf{k} \frac{(-i H_0 t)^j}{j !} \ket{\phi},
\end{equation}
where $\ket{\phi}$ is a stabilizer product state.
Suppose the target $n$-qubit Hamiltonian $H_0$ is $\mathbf{m}$-sparse, meaning that there are at most $\mathbf{m}$ nonzero Pauli terms.
Then, there are $\Gamma \coloneqq 1+\mathbf{m}+\mathbf{m}^2 + \dots + \mathbf{m}^\mathbf{k} = O(\mathbf{m}^\mathbf{k})$ terms on the right-hand side of \cref{eq:hyp_ham}.
So, we can write
\begin{equation}\label{eq:hyp}
    \ket*{\tilde\phi_0}
    = \sum_{\gamma=1}^\Gamma \ket{\phi_\gamma}
    = \sum_{\gamma=1}^\Gamma \bigotimes_{i=1}^n \ket*{\phi_\gamma^{(i)}}.
\end{equation}
Note that single-qubit states $\{\ket*{\phi_\gamma^{(i)}}\}$ are not normalized.

As shown in Ref.~\cite[Sec.~2.4]{gupta2025single}, constructing a DT basis involves evaluating expectation values of tensor-product observables.
Therefore, we start with the following statement.
\begin{proposition}\label{prop:exp_value}
    Let $\Omega = \bigotimes_{i \in A} \omega^{(i)} \otimes I^{[n] \setminus A}$ be a $\abs{A}$-local tensor-product operator.
    Calculating $\bra*{\tilde\phi_0} \Omega \ket*{\tilde\phi_0}$ requires $O(\Gamma^2 n)$ operations.
\end{proposition}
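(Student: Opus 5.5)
The plan is to expand the quadratic form using the decomposition of $\ket*{\tilde\phi_0}$ into $\Gamma$ product states from \eqref{eq:hyp}. By bilinearity,
\begin{equation}
    \bra*{\tilde\phi_0} \Omega \ket*{\tilde\phi_0}
    = \sum_{\gamma,\gamma'=1}^{\Gamma} \bra{\phi_\gamma} \Omega \ket{\phi_{\gamma'}}.
\end{equation}
This gives $\Gamma^2$ cross terms. Each one involves a product operator sandwiched between two product (unnormalized) states, so it factorizes qubit by qubit:
\begin{equation}
    \bra{\phi_\gamma} \Omega \ket{\phi_{\gamma'}}
    = \prod_{i\in A} \bra*{\phi_\gamma^{(i)}} \omega^{(i)} \ket*{\phi_{\gamma'}^{(i)}}
    \prod_{i\notin A} \braket*{\phi_\gamma^{(i)}}{\phi_{\gamma'}^{(i)}}.
\end{equation}

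Each factor is a $2\times2$ matrix--vector computation on single-qubit vectors, so it costs $O(1)$ arithmetic operations. Forming the product of $n$ such factors therefore costs $O(n)$. Summing over all $\Gamma^2$ pairs $(\gamma,\gamma')$ gives the claimed $O(\Gamma^2 n)$ total.

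For completeness, I will also check that the factors $\ket*{\phi_\gamma^{(i)}}$ in \eqref{eq:hyp} can be written down explicitly. Each term $\frac{(-it)^j}{j!} P_{a_1}\cdots P_{a_j}\ket{\phi}$ is a scalar times a product state. The Pauli string acts qubitwise on the stabilizer product state $\ket{\phi}$, and the scalar, consisting of the coefficients $\mu_{a}$ and $(-it)^j/j!$, can be absorbed into, say, the first qubit's factor. Strictly speaking, producing these factors costs $O(\mathbf{k}\, n)$ per term. However, this is a preprocessing step that is shared across all evaluations of $\Omega$. I would state that the proposition counts only the evaluation cost given the representation \eqref{eq:hyp}, or otherwise note that the preprocessing adds $O(\Gamma \mathbf{k} n)$, which is dominated when $\Gamma\ge \mathbf{k}$.

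The only mildly delicate point is exactly this bookkeeping: being clear that the $O(\Gamma^2 n)$ bound presumes the decomposition into $\Gamma$ product states is already available, and that the unnormalized single-qubit vectors cause no issue, since the factorization is purely algebraic. I will handle this by explicitly stating the input representation before the counting argument.
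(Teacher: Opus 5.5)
Your proposal is correct and follows the paper's proof essentially step for step. Both expand into the $\Gamma^2$ cross terms $\bra{\phi_\gamma}\Omega\ket{\phi_{\gamma'}}$, factorize each one qubitwise into $n$ constant-cost single-qubit overlaps, and sum. The paper adds a conjugate-symmetry remark (valid when the $\omega^{(i)}$ are Hermitian, as the projectors used later are) so that only unordered pairs are computed; this does not change the $O(\Gamma^2 n)$ bound. Your note that the count presumes the product decomposition is already available is a reasonable clarification that the paper leaves implicit.
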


\begin{proof}
    Expanding the expectation value produces
    \begin{subequations}
    \begin{align}
        \bra*{\tilde\phi_0} \Omega \ket*{\tilde\phi_0}
        &= \sum_{\gamma=1}^\Gamma \sum_{\gamma'=1}^\Gamma
        \bra{\phi_\gamma} \Omega \ket{\phi_{\gamma'}} \\
        &= \sum_{\gamma=1}^\Gamma \sum_{\gamma'=1}^\Gamma
        \Bigg[
        \prod_{a \in A}
        \bra*{\phi_\gamma^{(a)}} \omega^{(a)} \ket*{\phi_{\gamma'}^{(a)}}
        \prod_{b \notin A}
        \braket*{\phi_\gamma^{(b)}}{\phi_{\gamma'}^{(b)}}
        \Bigg].
    \end{align}
    \end{subequations}
    The expression above involves estimating $\Gamma^2 n$ inner products.
    Using $\bra*{\phi_\gamma^{(j)}}\omega^{(j)}\ket*{\phi_{\gamma'}^{(j)}}^*
    = \bra*{\phi_{\gamma'}^{(j)}}\omega^{(j)}\ket*{\phi_\gamma^{(j)}}$,
    we can compute each unordered pair $(\gamma,\gamma')$ once, giving
    $\binom{\Gamma+1}{2} n$ operations, which is still $O(\Gamma^2 n)$.
\end{proof}

Here, we show two ways to construct the DT basis required for the \textsc{Certify} algorithm.
We start with a naive approach and then slightly modify it to obtain better runtime.

\begin{theorem}[Building DT basis]
    Building DT basis for the hypothesis state in \cref{eq:hyp} takes $O(\Gamma^2 n^2)$ steps.
\end{theorem}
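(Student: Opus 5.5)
We need to recall how a DT basis is built in Gupta et al. For each qubit position $j$, conditioned on the outcomes of the earlier measurements, one chooses a single-qubit basis in which the conditional reduced state of qubit $j$ gives equal probabilities to both outcomes. The plan is to bound the work per qubit by $O(\Gamma^2 n)$ via Proposition~\ref{prop:exp_value}, and then multiply by the $n$ qubits. Throughout, ``building the basis'' means building it along the realized measurement path, since that path is what the protocol uses.

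We process the qubits sequentially, $j=1,\dots,n$ (or $n$ down to $k+1$, following the ordering in \cref{algo:state_cert_sub}).

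\textbf{Step $j$.} The state conditioned on the outcomes of qubits $1,\dots,j-1$ is obtained by applying the projector $\Pi=\bigotimes_{i<j}\kb{e_i}\otimes I$, where each $\ket{e_i}$ is the observed basis vector. Conditioning keeps the product-sum form \eqref{eq:hyp}: each term just has its $i$-th factor replaced by $\braket{e_i}{\phi_\gamma^{(i)}}\ket{e_i}$. To choose the basis for qubit $j$ we need the reduced $2\times 2$ matrix of qubit $j$, up to normalization. This requires the expectation values of $\Omega=\bigotimes_{i<j}\kb{e_i}\otimes\sigma_j$ for $\sigma\in\{I,X,Y,Z\}$. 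Each $\Omega$ is a tensor-product operator, so by Proposition~\ref{prop:exp_value} each costs $O(\Gamma^2 n)$, and the four together cost the same order.

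From the Bloch vector $\vec r$ of the reduced state, we choose any basis whose axis is orthogonal to $\vec r$. This gives equal outcome probabilities, i.e.\ the phase-state/DT condition, and costs $O(1)$. Repeating this over $n$ qubits gives $O(\Gamma^2 n^2)$ in total. The final measurement of \cref{algo:state_cert_sub}, in a basis containing $\ket{\phi'}$, needs only one more reduced-state computation of the same cost. If the outcome $x$ is also needed, the computational-basis prefix contributes nothing beyond fixing the $\ket{e_i}$.

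The main point to verify is that the DT construction of \cite[Sec.~2.4]{gupta2025single} really needs only $O(1)$ such tensor-product expectation values per qubit, rather than, for instance, recomputing normalizations or marginals over subsets. I would check this against their definition: the basis at step $j$ should depend only on the single-qubit marginal of the state conditioned on the previous outcomes. The normalization of the conditioned state is itself the expectation value with $\sigma=I$, so it adds no cost. If their construction instead requires marginals over the remaining qubits, these still reduce to partial traces, and a partial trace over product terms is just a product of inner products, which Proposition~\ref{prop:exp_value} already covers.
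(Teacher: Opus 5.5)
Your argument is the same as the paper's: a constant number of tensor-product expectation values per adaptively measured qubit, each costing $O(\Gamma^2 n)$ by Proposition~\ref{prop:exp_value}, over at most $n$ qubits (the paper counts the $n-k$ qubits after position $k$ and uses $\E[k]=n/2$, though $n-k\le n$ already suffices). One correction to your description of the DT step, which does not change the bound: qubit $k$ is left unmeasured until the end, so the basis for qubit $k+t$ is not read off a single conditional Bloch vector but is chosen from the conditional states of qubit $k+t$ for both values $q\in\{0,1\}$ of qubit $k$ (given $x$ and $b_1,\dots,b_{t-1}$). This is why the paper evaluates six quantities $n^q_{t,g}$ with $g\in\{0,+,i\}$ per qubit; that is still $O(1)$ tensor-product operators, as your fallback remark anticipates.
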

\begin{proof}
As detailed in Ref.~\cite[Sec.~2.4]{gupta2025single}, to measure in the DT basis, we are required to calculate 6 expectation values of tensor-product operators $n-k$ times.
Combining this with \cref{prop:exp_value}, we obtain scaling $\bigO(\Gamma^2n(n-k))$.
Since $\E[k] = n/2$, on average we would need to calculate $\bigO(\Gamma^2 n^2)$ inner products to get a DT basis for a single run of \cref{algo:state_cert_sub}.
\end{proof}

Thanks to the recursive nature of DT construction, many calculations used to obtain $\ket{b_i}$ can be reused to obtain $\ket{b_{i+1}}$, resulting in

\begin{theorem}[Building DT basis; better approach]
    Building DT basis for the hypothesis state in \cref{eq:hyp} takes $O(\Gamma^2 n)$ steps.
\end{theorem}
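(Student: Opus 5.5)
Because the DT construction is recursive, the idea is to replace the $O(\Gamma^2 n)$-per-qubit recomputation of \cref{prop:exp_value} with incrementally maintained partial products. Following \cite[Sec.~2.4]{gupta2025single}, the DT basis is built qubit by qubit. At step $j$, the basis vector $\ket{b_j}$ on qubit $j$ is determined by a constant number of expectation values of $\ket*{\tilde\phi_0}$ conditioned on the projections already chosen for qubits $k+1,\ldots,j-1$. Using the decomposition \cref{eq:hyp}, every such expectation value has the form
\begin{equation*}
    \sum_{\gamma,\gamma'=1}^{\Gamma} c_{\gamma\gamma'}^{(j-1)}\, \bra*{\phi_\gamma^{(j)}} \omega^{(j)} \ket*{\phi_{\gamma'}^{(j)}}\, r_{\gamma\gamma'}^{(j+1)},
\end{equation*}
where $c_{\gamma\gamma'}^{(j-1)}$ is the product of the single-qubit factors $\bra*{\phi_\gamma^{(a)}} \Pi_a \ket*{\phi_{\gamma'}^{(a)}}$ for the already fixed qubits $a<j$ (computational-basis outcomes or chosen DT projectors), and $r_{\gamma\gamma'}^{(j+1)} = \prod_{b>j}\braket*{\phi_\gamma^{(b)}}{\phi_{\gamma'}^{(b)}}$ is the suffix product of overlaps on the qubits not yet processed.

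The plan has two preprocessing passes and one forward pass. First, in a single backward pass, compute and store all suffix products $r_{\gamma\gamma'}^{(j)}$ for $j=n,\ldots,1$. Each one is obtained from $r_{\gamma\gamma'}^{(j+1)}$ by a single multiplication, so this costs $O(\Gamma^2 n)$ operations and $O(\Gamma^2 n)$ memory. Second, run the forward pass: keep the $\Gamma\times\Gamma$ prefix array $c^{(j-1)}$. At step $j$, evaluate the constant number (six) of required expectation values. Each is a sum over $\Gamma^2$ entries of $c^{(j-1)}$, the single-qubit matrix element, and $r^{(j+1)}$, so it costs $O(\Gamma^2)$. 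Once $\ket{b_j}$ and the corresponding outcome are fixed, update $c^{(j)}_{\gamma\gamma'} = c^{(j-1)}_{\gamma\gamma'}\bra*{\phi_\gamma^{(j)}}\Pi_j\ket*{\phi_{\gamma'}^{(j)}}$, again in $O(\Gamma^2)$. Summing over at most $n$ qubits gives $O(\Gamma^2 n)$ in total. The computational-basis measurements on the first $k-1$ qubits are handled by the same prefix update, so they add nothing beyond this bound. The Hermitian symmetry used in \cref{prop:exp_value} can halve the constants but is not needed.

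The main obstacle is checking that the DT recursion in \cite{gupta2025single} really depends only on quantities of this prefix-times-local-times-suffix form. In particular, the phase-state condition at step $j$ must involve only the conditional reduced state on qubit $j$, given the projections on earlier qubits and tracing out the later ones. I would verify this by writing the conditional two-by-two reduced density matrix of qubit $j$ explicitly. Its four entries are $\sum_{\gamma\gamma'} c^{(j-1)}_{\gamma\gamma'}\,\phi_{\gamma}^{(j)}(x)^{*}\phi_{\gamma'}^{(j)}(y)\, r^{(j+1)}_{\gamma\gamma'}$, and the DT basis vector is then a closed-form function of this $2\times2$ matrix. Normalizations can be dropped because the construction only needs ratios.
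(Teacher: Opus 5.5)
Your proposal is correct and follows essentially the same route as the paper: precompute the suffix overlap products $\prod_{l>j}\braket*{\phi_\gamma^{(l)}}{\phi_{\gamma'}^{(l)}}$, update the prefix factors incrementally as each DT vector is fixed, and spend $O(\Gamma^2)$ per qubit, which gives $O(\Gamma^2 n)$ in the worst case without the paper's substitution $\E[k]=n/2$. One detail to adjust is that the six quantities the DT step actually needs, $n_{t,g}^q$, carry a projector $\kb{q}$ on qubit $k$ for each $q\in\{0,1\}$, so you should keep two prefix arrays (one per branch) rather than a single conditional state on qubit $j$; this changes only constants.
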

\begin{proof}
    At iteration $t \in [n-k]$ of DT basis construction, we need to evaluate
    \begin{equation}
         n_{t,g}^q \coloneqq \bra*{\tilde\phi_0} (\kb{x} \otimes \kb{q}^{(k)} \otimes \kb{b_1}^{(k+1)} \otimes \dots \otimes \kb{b_{t-1}}^{(k+t-1)} \otimes \kb{g}^{(k+t)}) \ket*{\tilde\phi_0},
    \end{equation}
    where $x \in \{0,1\}^{k-1}$ is a bitstring sampled in step 2 of \cref{algo:state_cert_sub}, $\{b_1, \dots, b_{t-1}\}$ are computed in previous iterations, and $\ket*{\tilde\phi_0}$ is as in \cref{eq:hyp_ham}.
    We need to do this for each choice of $q \in \{0,1\}$ and $g \in \{0,+,i\}$.
    
    We expand the expectation value above using the definition of $\ket*{\tilde\phi_0}$:
    \begin{subequations}
    \begin{align}
        n_{t,g}^q &= \sum_{\gamma, \gamma'=1}^{\Gamma} \bra{\phi_\gamma}
        (\kb{x} \otimes \kb{q}^{(k)} \otimes \kb{b_1}^{(k+1)} \otimes \dots \otimes \kb{b_{t-1}}^{(k+t-1)} \otimes \kb{g}^{(k+t)}) \ket{\phi_{\gamma'}} \\
        &= \sum_{\gamma, \gamma'=1}^{\Gamma}
        \prod_{i=1}^{k-1} \braket*{\phi_\gamma^{(i)}}{x^{(i)}}\braket*{x^{(i)}}{\phi_{\gamma'}^{(i)}} \times \braket*{\phi_\gamma^{(k)}}{q}\braket*{q}{\phi_{\gamma'}^{(k)}} \times \prod_{j=k+1}^{k+t-1} \braket*{\phi_\gamma^{(j)}}{b_{j-k}}\braket*{b_{j-k}}{\phi_{\gamma'}^{(j)}} \\
        &\hspace{5em}\times \braket*{\phi_\gamma^{(k+t)}}{g} \braket*{g}{\phi_{\gamma'}^{(k+t)}} \times \prod_{l=k+t+1}^{n} \braket*{\phi_\gamma^{(l)}}{\phi_{\gamma'}^{(l)}}.
    \end{align}
    \end{subequations}
    Four out of five factors above do not include $\ket{b_i}$ and thus can be pre-computed, i.e.~estimated at $t=0$.
    In particular,
    \begin{itemize}
        \item $\prod_{i=1}^{k-1} \braket*{\phi_\gamma^{(i)}}{x^{(i)}}\braket*{x^{(i)}}{\phi_{\gamma'}^{(i)}}$ requires evaluating $\braket*{\phi_\gamma^{(i)}}{x^{(i)}} = (\braket*{x^{(i)}}{\phi_{\gamma}^{(i)}})^*$ for $i \in [k-1]$ and $\gamma \in [\Gamma]$, amounting to $O(\Gamma k)$ inner products;
        \item $\braket*{\phi_\gamma^{(k)}}{q}\braket*{q}{\phi_{\gamma'}^{(k)}}$ requires evaluating $\braket*{\phi_\gamma^{(k)}}{q} = (\braket*{q}{\phi_\gamma^{(k)}})^*$ for $\gamma \in [\Gamma]$ and $q\in \{0,1\}$, amounting to $O(\Gamma)$ inner products;
        \item $\braket*{\phi_\gamma^{(k+t)}}{g} \braket*{g}{\phi_{\gamma'}^{(k+t)}}$ requires evaluating $\braket*{\phi_\gamma^{(k+t)}}{g} \braket*{g}{\phi_{\gamma'}^{(k+t)}}$ for $\gamma, \gamma' \in [\Gamma]$ and $t\in [n-k]$, amounting to $O(\Gamma^2(n-k))$ inner products;
        \item $\prod_{l=k+t+1}^{n} \braket*{\phi_\gamma^{(l)}}{\phi_{\gamma'}^{(l)}}$ requires evaluating $\braket*{\phi_\gamma^{(l)}}{\phi_{\gamma'}^{(l)}}$ for $\gamma, \gamma' \in [\Gamma]$ and $l = k+2 \dots n$, amounting to $O(\Gamma^2(n-k))$ inner products.
    \end{itemize}
    Therefore, the pre-computing stage can be completed in $O(\Gamma^2(n-k))$ time.
    Finally,
    \begin{itemize}
        \item $\prod_{j=k+1}^{k+t-1} \braket*{\phi_\gamma^{(j)}}{b_{j-k}}\braket*{b_{j-k}}{\phi_{\gamma'}^{(j)}}$ requires evaluating $\braket*{\phi_\gamma^{(j)}}{b_{j-k}} = (\braket*{b_{j-k}}{\phi_{\gamma}^{(j)}})^*$ for $\gamma \in [\Gamma]$ and $j=k+1 \dots k+t-1$.
        However, the overlaps have already been calculated for $j\le k+t-2$ during previous steps, meaning that at step $t$, we only need to estimate $\braket*{\phi_\gamma^{(j)}}{b_{t-1}}$ for $\gamma \in [\Gamma]$, amounting to $O(\Gamma)$.
        Repeating this for $t \in [n-k]$ gives complexity $O(\Gamma(n-k))$.
    \end{itemize}
    
    In summary, the runtime of constructing the DT basis is dominated by the pre-computing stage which requires evaluating $O(\Gamma^2(n-k))$ inner products.
    Substituting $\E[k] = n/2$, we obtain $O(\Gamma^2 n)$.
\end{proof}

\end{document}